\documentclass{article}
\usepackage[utf8]{inputenc}
\usepackage[american]{babel}
\usepackage{enumerate,balance}
\usepackage{amsfonts,amsmath,amssymb,amsthm,amstext,latexsym}
\usepackage{paralist}
\usepackage{url}
\usepackage{xspace}
\usepackage[ruled,lined,boxed,commentsnumbered]{algorithm2e}
\usepackage{tikz}
\usepackage{array}
\usepackage{verbatim}
\usepackage{times}
\usepackage[shadow]{todonotes}
\usepackage{fullpage}

\usepackage{subcaption}

\usepackage{tikz}
\usetikzlibrary{positioning}


\newcommand{\ema}[1]{\ensuremath{#1}}
\newcommand{\kinp}[1][k]{\textsc{\ema{#1}-in-\ema{p}-Co\-Sche\-dule}\xspace}
\newcommand{\pinp}[1][p]{\textsc{\ema{#1}-in-\ema{#1}-Co\-Sche\-dule}\xspace}

\newcommand{\tpart}{3-\textsc{Partition}\xspace}

\newcommand{\leqb}{\ema{\preccurlyeq_{\sigma}}\xspace}
\newcommand{\bs}{1-pack-schedule\xspace}

\newcommand{\pack}{pack\xspace}
\newcommand{\packs}{packs\xspace}
\newcommand{\work}{work\xspace}
\newcommand\II{\ema{\mathcal{I}}\xspace}
\newcommand{\MS}{\ema{\text{COST}}\xspace}
\newcommand{\opt}{\ema{\textsc{opt}}\xspace}

\newcommand{\makebatch}{\textsc{Make-\pack}\xspace}

\newcommand{\ok}{\textsc{\pack-Approx}\xspace}
\newcommand{\bbb}{\textsc{\pack-by-\pack}\xspace}

\newcommand{\randompack}{\textsc{Random-Pack}\xspace}
\newcommand{\randomproc}{\textsc{Random-Proc}\xspace}
\newcommand{\email}[1]{\url{#1}}

\newcommand{\si}[1]{\ema{\sigma^{(#1)}}}

\newtheorem{lemma}{Lemma}
\newtheorem{theorem}{Theorem}

\begin{document}


\title{Co-Scheduling Algorithms for High-Throughput \\ Workload Execution}
\date{\today}

%
%

\author{
Guillaume Aupy\thanks{LIP, Ecole Normale Sup\'erieure de Lyon, France} \and Manu Shantharam\thanks{University of Utah, USA} \and Anne Benoit\footnotemark[1]~\thanks{Institut Universitaire de France, France} \and Yves Robert\footnotemark[1]\footnotemark[3]~\thanks{University of Tennessee Knoxville, USA} \and Padma Raghavan\thanks{Pennsylvania State University, USA}\\ 
\{guillaume.aupy, anne.benoit, yves.robert\}@ens-lyon.fr; \\
shantharam.manu@gmail.com; raghavan@cse.psu.edu
}

\maketitle

\begin{abstract}
This paper investigates co-scheduling algorithms for processing a set of parallel applications. 
Instead of executing each application one by one, using a maximum degree of parallelism 
for each of them, we aim at scheduling several applications concurrently. 
We partition the original application set into a series of \packs, which are executed one by one. 
A pack comprises several applications, each of them with an assigned number of 
processors, with the constraint that the total number of processors assigned within a \pack does not exceed
the maximum number of available processors. The objective is to determine a partition into \packs, and an assignment of processors to applications, that minimize the sum of the execution times of the \packs.
We thoroughly study the complexity of this optimization problem, and propose several heuristics that
exhibit very good performance on a variety of workloads, whose application execution times model 
profiles of  parallel scientific codes. We show that  co-scheduling leads to 
to faster workload completion time and to faster response times on average (hence increasing system throughput and saving energy), for significant benefits over traditional scheduling  from both the user and system perspectives.
\end{abstract}

\section{Introduction}

The execution time of many high-performance computing applications can be significantly reduced when using a large number of processors. Indeed, parallel multicore platforms enable the fast processing of 
very large size jobs, thereby rendering the solution of challenging scientific problems more tractable. 
However, monopolizing all computing resources to accelerate the processing of a single application
is very likely to lead to inefficient resource usage. This is because the typical speed-up profile of most applications is sub-linear and even reaches a threshold: when the number of processors increases,
the execution time first decreases, but not linearly, because it suffers from the overhead due to communications and load imbalance; at some point, adding more resources does not lead to any significant benefit.

In this paper, we consider a pool of several applications that have been submitted for execution. 
Rather than executing each of them in sequence, with the maximum number of available resources,
we introduce co-scheduling algorithms that execute several applications concurrently.
We do increase the individual
execution time of each application, but (i) we improve the efficiency of 
the parallelization, because each application is scheduled on fewer resources;
(ii) the total execution time will be much shorter; and (iii) the average response time
will also be shorter. In other words, co-scheduling
increases platform yield (thereby saving energy) without sacrificing
response time. 

In operating high performance computing systems, the costs of energy
consumption can greatly impact the total costs of ownership. Consequently,
there is a move away from a focus on peak performance (or speed) and
towards improving energy efficiency~\cite{shalf,balaji}. Recent results
on improving the energy efficiency of workloads can be broadly
classified into approaches that focus on dynamic
voltage and frequency scaling, or alternatively, task aggregation or
co-scheduling.
In both types of approaches, the individual execution time of
an application may increase but there can be considerable energy savings
in processing a workload.

More formally, we deal with the following problem: given (i) a distributed-memory platform 
with $p$ processors, and (ii) $n$ applications, or tasks, $T_{i}$, with their execution profiles
($t_{i,j}$ is the execution time of $T_{i}$ with $j$ processors), what is the best way to \emph{co-schedule} them, i.e., to partition them into \packs,  so as to minimize the sum of the execution times over all \packs.
Here a \pack is a subset of tasks, together with a processor assignment for each task. The constraint is
that the total number of resources assigned to the \pack does not exceed $p$, and the execution time of
the \pack is the longest execution time of a task within that \pack.
The objective of this paper is to study this co-scheduling problem, both theoretically and experimentally,
We aim at demonstrating the gain that can be achieved through co-scheduling, both on platform yield and response time, using a set of real-life application profiles. 

On the theoretical side, to the best of our knowledge, the complexity of the co-scheduling problem  
has never been investigated,
except for the simple case when one enforces that each \pack comprises at most $k=2$ tasks~\cite{syp}. 
While the problem has polynomial complexity for the latter restriction (with at most $k=2$ tasks per \pack), we show that it is NP-complete when assuming at most $k \geq 3$ tasks per pack. Note that the instance with
$k=p$ is the general, unconstrained, instance of the co-scheduling problem. We also propose an approximation algorithm for the general instance. In addition, we propose an optimal processor assignment procedure when the tasks that form a \pack are given. We use these two results to derive efficient heuristics.
Finally, we discuss how to optimally solve small-size instances, either through enumerating partitions, or through an integer linear program: this has a potentially exponential cost, but allows us to assess the absolute quality of the heuristics that we have designed. Altogether, all these results lay solid theoretical foundations
for the problem.

On the experimental side, we study the performance of the heuristics 
on a variety of workloads, whose application execution times model 
profiles of parallel scientific codes. We focus on three criteria:
(i) cost of the co-schedule, i.e., total execution time; 
(ii) packing ratio, which evaluates the idle time of processors
during execution; and (iii) response time compared to a
fully parallel execution of each task starting from shortest task. 
The proposed heuristics show very good performance within
a short running time, hence validating the approach. 


The paper is organized as follows. We discuss related work in Section~\ref{sec.related}. 
The problem is then formally defined in Section~\ref{sec.pb}. Theoretical results
are presented in Section~\ref{sec.theory}, exhibiting the problem complexity, 
discussing sub-problems and optimal solutions, and providing an approximation 
algorithm. Building upon these results, several polynomial-time heuristics
are described in Section~\ref{sec.heuristics}, and they are thoroughly evaluated
in Section~\ref{sec:exp}. Finally we conclude and discuss future work in Section~\ref{sec.conc}.

\section{Related work}
\label{sec.related}

In this paper, we deal with \pack scheduling for parallel tasks, aiming at makespan minimization
(recall that the makespan is the total execution time). The 
corresponding problem with sequential tasks (tasks that execute on a single processor) is easy to 
solve for the makespan minimization objective: simply make a \pack out of the largest $p$ tasks, and 
proceed likewise while there remain tasks. Note that the \pack scheduling problem with sequential tasks 
has been widely studied for other objective functions, see  Brucker et 
al.~\cite{brucker1997scheduling} for various job cost functions, and Potts and 
Kovalyov~\cite{potts2000scheduling} for a survey. 
Back to the problem with sequential tasks and the  makespan objective, Koole and Righter in~\cite{koole2001stochastic} 
deal with the case where the execution time of each task is unknown but defined by a probabilistic distribution. They 
showed counter-intuitive properties, that enabled them to derive an algorithm that computes the 
optimal policy when there are two processors, improving the result of Deb and 
Serfozo~\cite{deb1973optimal}, who considered the stochastic problem with identical jobs.

To the best of our knowledge, the problem with parallel tasks has not been studied as such. However, 
it was introduced by Dutot et al. in~\cite{dutot2003scheduling} as a moldable-by-phase model to 
approximate the moldable problem. The moldable task model is similar to the \pack-scheduling 
model, but one does not have the additional constraint (\pack constraint) that the execution of new tasks cannot start before all tasks in the current \pack are completed.
Dutot et al. in~\cite{dutot2003scheduling} provide an optimal polynomial-time solution for the problem of \pack scheduling
identical independent tasks, using a dynamic programming algorithm. This is the only 
instance of \pack-scheduling with parallel tasks that we found in the literature.

A closely related problem is the rectangle packing problem, or 2D-Strip-packing. Given a set of 
rectangles of different sizes, the problem consists in packing these rectangles into another rectangle of 
size $p\times m$. If one sees one dimension ($p$) as the number of processors, and the 
other dimension ($m$) as the maximum makespan allowed, this problem is identical to the variant of our problem
where the number of processors is pre-assigned to each task: each rectangle $r_i$ of size 
$p_i \times m_i$ that has to be packed can be seen as the task $T_i$ to be computed on 
$p_i$ processors, with $t_{i,p_i} = m_i$. In~\cite{turek1994scheduling}, Turek et al. approximated
the rectangle packing problem using \emph{shelf-based} solutions: the rectangles are assigned to 
\emph{shelves}, whose placements correspond to constant time values. All rectangles assigned to 
a shelf have equal starting times, and the next shelf is placed on top of the previous shelf. This is 
exactly what we ask in our \pack-scheduling model. 
This problem is also called level packing in some papers, and we refer the reader to a recent survey on 
2D-packing algorithms by Lodi et al.~\cite{lodi2002two}. In particular, Coffman et al. 
in~\cite{coffman1980performance} show that level packing algorithm can reach a $2.7$ 
approximation for the 2D-Strip-packing problem ($1.7$ when the length of each rectangle is bounded 
by 1).
Unfortunately, all these algorithms consider the number of processors (or width of the rectangles) to
be already fixed for each task, hence they cannot be used directly in our problem for which a key decision is to decide the number
of processors assigned to each task.

In practice, \pack scheduling is really
useful as shown by recent results.  Li et al.~\cite{li2010power} propose
a framework to predict the energy and performance impacts of power-aware
MPI task aggregation.  Frachtenberg et al.~\cite{fcs} show that system
utilization can be improved through their  schemes to co-schedule
jobs based on their load-balancing requirements and inter-processor
communication patterns.  In our earlier work~\cite{syp}, we had shown
that even when the pack-size is limited to $2$, co-scheduling based on
speed-up profiles can lead to faster workload completion and corresponding
savings in system energy.

Several recent
publications~\cite{sally,chandra2005predicting,onchipcosched} consider
co-scheduling at a single multicore node, when contention for resources
by co-schedu\-led tasks leads to complex tradeoffs between energy and
performance measures.  Chandra et al. \cite{chandra2005predicting}
predict and utilize inter-thread cache contention at a multicore in
order to improve performance. Hankendi and Coskun~\cite{onchipcosched}
show that there can be measurable  gains in energy per unit of work
through the application of their multi-level co-scheduling technique
at runtime which is based on
classifying tasks according to specific performance measures.
Bhaduria and McKee~\cite{sally} consider local search heuristics to
co-schedule tasks in a resource-aware manner at a multicore node to
achieve significant gains in thread throughput per watt.

These publications demonstrate that complex tradeoffs cannot be captured 
through the use of the speed-up measure alone, without significant additional measurements
to capture performance variations from cross-application interference at a
multicore node.  Additionally, as shown in our earlier work~\cite{syp}, we
expect significant benefits even when we aggregate only across multicore
nodes because speed-ups suffer due to 
of the longer
latencies of data transfer across nodes.  
We can therefore project savings in energy as being commensurate
with the savings in the time to complete a workload through co-scheduling.
Hence, we only test  configurations  where no more than a single application
can be scheduled on a multicore node. 


\section{Problem definition}
\label{sec.pb}

The application consists of  $n$ independent tasks $T_1,\dots,T_n$. The target 
execution platform consists of $p$ identical processors, and each task $T_i$ can be assigned 
an arbitrary number $\sigma(i)$ of 
processors, where $1 \leq \sigma(i) \leq p$. 
The objective is to minimize the total execution time by co-scheduling several tasks onto the $p$
resources. Note that the approach is agnostic of the granularity of each processor, which can be either
a single CPU or a multicore node.

\textbf{Speedup profiles --}
Let $t_{i,j}$ be the execution time of task $T_{i}$ with $j$
processors, and $\work(i,j) = j \times t_{i,j}$ be the corresponding work.
We assume the following for $1\leq i \leq n$ and $1\leq j < p$:\
\begin{equation}
\text{Non-increasing execution time:~} t_{i,j+1} \leq t_{i,j}
\label{hyp.monotony}
\end{equation}
\begin{equation}
\text{Non-decreasing work:~} \work(i,j) \leq \work(i,j+1)
\label{hyp.total.work}
\end{equation}
Equation~\eqref{hyp.monotony} implies that execution time is a 
non-increasing function of the number of processors. Equation~\eqref{hyp.total.work}
states that efficiency decreases with the number of enrolled processors: in other words, 
parallelization has a cost!  As a side note, we observe that these requirements make good sense
in practice: many scientific tasks $T_{i}$ are such that $t_{i,j}$ first decreases (due to load-balancing) and then increases (due to communication overhead), 
reaching a minimum for $j=j_{0}$; we can always let $t_{i,j} = t_{i,j_{0}}$ for $j \geq j_{0}$ by
never actually using more than $j_{0}$ processors for $T_{i}$.

\textbf{Co-schedules --}
A co-schedule partitions the $n$ tasks into groups (called 
\emph{\packs}), so that (i) all tasks from a given pack start their execution at the same time;
and (ii) two tasks from different \packs have disjoint execution intervals. See 
Figure~\ref{fig.coschedule} for an example.
The execution time, or \emph{cost}, of a \pack is the maximal execution time of a task in that \pack, and
the cost of a co-schedule is the sum of the costs of each \pack.

\textbf{\kinp optimization problem --} Given a fixed constant $k\leq p$, find a co-schedule with at most $k$ tasks per \pack that minimizes the execution time. The most general problem is when $k=p$,
but in some frameworks we may have an upper bound $k<p$ on the maximum number of tasks within
each \pack.

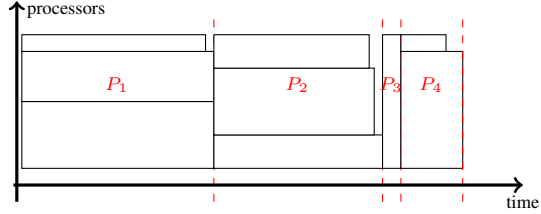
\begin{figure}[h]
\centering
\begin{tikzpicture}
\begin{scope}[scale=2/3]
\begin{scope}[xscale = 1/30, yscale=1/3]

\begin{scope}[shift={(0,0)}]
\draw[fill=white, draw = black] (0,0) rectangle (115,4) ; 
\draw[fill=white, draw = black] (0,4) rectangle (115,7) ; 
\draw[fill=white, draw = black] (0,7) rectangle (110,8) ; 
\draw[red] (57,5) node {$\scriptstyle{P_1}$};
\draw[red, dashed] (115,-2) --++(0,11) node {};
\end{scope}

\begin{scope}[shift={(115,0)}]
\draw[fill=white, draw = black] (0,0) rectangle (101,2) ; 
\draw[fill=white, draw = black] (0,2) rectangle (96,6) ; 
\draw[fill=white, draw = black] (0,6) rectangle (93,8) ; 
\draw[red] (50,5) node {$\scriptstyle{P_2}$};
\draw[red, dashed] (101,-2) --++(0,11) node {};
\end{scope}

\begin{scope}[shift={(216,0)}]
\draw[fill=white, draw = black] (0,0) rectangle (11,8) ;
\draw[red] (5,5) node {$\scriptstyle{P_3}$};
\draw[red, dashed] (11,-2) --++(0,11) node {};
\end{scope}

\begin{scope}[shift={(227,0)}]
\draw[fill=white, draw = black] (0,0) rectangle (37,7) ;
\draw[fill=white, draw = black] (0,7) rectangle (27,8) ; 
\draw[red] (18,5) node {$\scriptstyle{P_4}$};
\draw[red, dashed] (37,-2) --++(0,11) node {};
\end{scope}

\draw[->, very thick] (-5,-1) -- (300,-1) node [pos=1, below] {$\scriptstyle{\text{time}}$};
\draw[->, very thick] (-3,-2) -- (-3,10) node [pos=0.95, right] {$\scriptstyle{\text{processors}}$};
\end{scope}
\end{scope}
\end{tikzpicture}
\caption{A co-schedule with four packs $P_{1}$ to $P_{4}$.}
\label{fig.coschedule}
\end{figure}


\section{Theoretical results}
\label{sec.theory}

First we discuss the complexity of the problem in Section~\ref{sec.complexity}, 
by exhibiting polynomial and NP-complete instances. 
Next we discuss how to optimally schedule a set of $k$ tasks in a single \pack
(Section~\ref{sec.pack}). Then we explain how to compute the optimal solution (in expected exponential cost) in Section~\ref{sec.ilp}. 
Finally, we provide an approximation algorithm in Section~\ref{sec.approx}.

\subsection{Complexity}
\label{sec.complexity}

\begin{theorem}
The \kinp[1] and \kinp[2] problems can both be solved in polynomial time.
\end{theorem}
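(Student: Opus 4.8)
The plan is to handle the two cases separately, since \kinp[1] is essentially immediate while \kinp[2] requires a reduction to matching. For \kinp[1], every \pack contains a single task, so the \packs are executed in sequence and their costs add up independently. Since the execution time $t_{i,j}$ is non-increasing in $j$ (Equation~\eqref{hyp.monotony}), the cheapest way to run a lone task $T_i$ is to give it all $p$ processors, for a cost of $t_{i,p}$. Hence the optimal co-schedule assigns $p$ processors to each task and has cost $\sum_{i=1}^{n} t_{i,p}$, computed in linear time; no combinatorial choice remains.

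For \kinp[2], a co-schedule is a partition of the tasks into singletons and pairs, together with a processor assignment. First I would precompute, for every unordered pair $\{T_i,T_j\}$, the optimal cost $c(i,j)$ of placing the two tasks in a common \pack. Because the total processor count in a \pack must not exceed $p$ and adding processors never increases a task's time, it is optimal to use all $p$ processors; thus $c(i,j) = \min_{1 \le a \le p-1} \max\bigl(t_{i,a},\, t_{j,p-a}\bigr)$, obtainable by scanning the $p-1$ possible splits in $O(p)$ time. Likewise, the cost of scheduling $T_i$ alone is $c(i) = t_{i,p}$ as above. The remaining task is to choose a partition into pairs and singletons minimizing the sum of these precomputed costs.

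Then I would reduce this selection problem to \mwpm. The idea is to build an auxiliary graph on $2n$ vertices: one vertex $v_i$ per task and one \emph{shadow} vertex $u_i$ per task. I would add an edge $(v_i,v_j)$ of weight $c(i,j)$ for each pair, an edge $(v_i,u_i)$ of weight $c(i)$ encoding the choice of leaving $T_i$ as a singleton, and edges $(u_i,u_j)$ of weight $0$ among the shadow vertices. A perfect matching then matches each $v_i$ either to some $v_j$ (forming a \pack of two) or to its own shadow $u_i$ (a singleton \pack), while the leftover shadow vertices pair up freely at no cost. Since the number of tasks placed in pairs is even, such a matching always exists, and its weight equals exactly the co-schedule cost. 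A minimum-weight perfect matching, computable in polynomial time (e.g.\ via the blossom algorithm), therefore yields an optimal co-schedule.

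The main obstacle, and the step deserving the most care, is verifying that this gadget is faithful in both directions: that every \kinp[2]-feasible co-schedule corresponds to a perfect matching of equal weight and conversely. In particular I would check that the only edge incident to a shadow vertex $u_i$ among the real vertices is $(v_i,u_i)$, so that a singleton choice can never be confused with a pairing, and that the zero-weight shadow edges neither create spurious solutions nor perturb the objective value.
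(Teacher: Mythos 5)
Your proposal is correct and follows essentially the same route as the paper: reduce \kinp[1] to assigning all $p$ processors per task, and reduce \kinp[2] to \mwpm with pair-edge weights $\min_{a}\max(t_{i,a},t_{j,p-a})$ and singleton cost $t_{i,p}$. The only difference is cosmetic: the paper encodes singleton \packs as self-loops of weight $t_{i,p}$ in the matching graph, whereas you implement the same idea rigorously via the standard shadow-vertex doubling, which is a valid (arguably cleaner) realization of the paper's construction.
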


\begin{proof}
This result is obvious for \kinp[1]: each task is assigned exactly $p$ processors 
(see Equation~\eqref{hyp.monotony}) and the 
minimum execution time is $\sum_{i=1}^n t_{i,p}$.

This proof is more involved for \kinp[2], and we start with the \pinp[2] problem to get an intuition.
Consider the weighted undirected graph $G=(V,E)$, where $|V|=n$, each vertex  
$v_i \in V$ corresponding to a task $T_i$. The edge set $E$  is the following: (i)
for all $i$, there is a loop on $v_i$ of weight $t_{i,2}$; (ii) for all $i<i'$, there is an edge between $v_i$ and $v_{i'}$ of weight $\max (t_{i,1},t_{i',1})$.
Finding a perfect matching of minimal weight in  $G$ leads to the 
optimal solution to \pinp[2], which can thus be solved in polynomial time.

For the \kinp[2] problem, the proof is  similar, the only difference lies in the construction 
of the edge set  $E$: (i) for all $i$, there is a loop on $v_i$ of weight $t_{i,p}$; (ii) for all $i<i'$, there is 
an edge between $v_i$ and $v_{i'}$ of weight 
$\min_{j=1..p}\left (\max (t_{i,p-j},t_{i',j}) \right )$.
Again, a perfect matching of minimal weight in  $G$  gives 
the optimal solution to \kinp[2].  We conclude that the \kinp[2] problem 
can be solved in polynomial time.
\end{proof}

\begin{theorem}
	\label{thm.kinp}
When $k\geq 3$, the \kinp problem is strongly NP-complete.
\end{theorem}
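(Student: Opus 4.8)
The plan is to establish membership in NP first and then prove strong NP-hardness by a reduction from \tpart, which is strongly NP-complete. Membership is routine: a co-schedule is certified by the partition into \packs together with the processor count assigned to each task; this certificate has polynomial size, and recomputing each \pack cost and summing them is polynomial, so the decision version ``is there a co-schedule of cost at most $C$?'' is in NP.

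For the hardness reduction I would start from a \tpart instance with $3m$ integers $a_1,\dots,a_{3m}$ satisfying $\sum_i a_i = mB$ and $B/4 < a_i < B/2$, set $p=B$, and create one task $T_i$ per integer with the \emph{constant-work} profile $t_{i,j}=a_i/j$ for $1\le j\le a_i$ and $t_{i,j}=1$ for $a_i\le j\le p$. This profile meets Equations~\eqref{hyp.monotony} and~\eqref{hyp.total.work}, since $t_{i,j}$ is non-increasing and $\work(i,j)=\max(a_i,j)$ is non-decreasing; I would take the target cost to be $m$. The engine of the proof is an \emph{area bound}: because $\work(i,j)=\max(a_i,j)\ge a_i$, the total work is at least $\sum_i a_i = mB$, while within any \pack of cost $c$ the work is at most $p\,c$ (at most $p$ processors run for at most time $c$). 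Summing over \packs gives $\MS\ge(\text{total work})/p\ge m$. Equality forces every \pack to be \emph{tight} — all $p$ processors busy for the whole \pack duration, every task running exactly that long — which yields $j_i=a_i/c_P$ and $\sum_{i\in P}a_i=c_P\,p$ with $c_P\ge 1$; combined with $B/4<a_i<B/2$ this already implies each \pack holds at least three tasks.

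The two directions then close the case $k=3$. A \tpart solution gives $m$ \packs of three tasks, each task getting $a_i$ processors, each \pack of cost $1$, for total cost $m$. Conversely, for $k=3$ a schedule of cost $m$ has at least $m$ \packs (only three of the $3m$ tasks fit per \pack) and at most $m$ \packs (the values $c_P\ge1$ sum to $m$), hence exactly $m$ \packs, each with three tasks and $c_P=1$, i.e.\ $\sum_{i\in P}a_i=B$ for every \pack: a \tpart solution. This settles $k=3$ cleanly.

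The step I expect to be the main obstacle is extending the backward direction to every fixed $k\ge 3$. Since the non-decreasing-work hypothesis makes the tasks effectively moldable, several $B$-groups can be \emph{stretched} onto fewer processors and merged into a single heavier \pack at no extra total cost (for instance six tasks run on half their processors form one \pack of cost $2$), so the counting bound $\#\packs\ge m$ collapses once $k>3$ and a cost-$m$ schedule need no longer expose a \tpart solution. To rule this out I would preprocess the instance so that the integrality of the $j_i=a_i/c_P$ is satisfiable only in the exactly-three configuration: any tight \pack with common stretch factor $c_P>1$ requires a common divisor of its task sizes, so forcing the sizes to be pairwise coprime (or, more cheaply, rescaling and shifting to $\hat a_i=Ra_i+1$, $\hat B=RB+3$ with $R$ a multiple of $2,\dots,k$, which makes $c_P>1$ arithmetically impossible) collapses every \pack to $c_P=1$ and $|P|=3$. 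Making this anti-stretching transformation watertight while keeping all data polynomially bounded, so that strong NP-completeness is preserved rather than merely NP-completeness, is the delicate part of the argument.
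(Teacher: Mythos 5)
Your NP-membership argument and your $k=3$ case are sound: the constant-work gadget $t_{i,j}=a_i/j$ (for $j\le a_i$) satisfies Equations~\eqref{hyp.monotony} and~\eqref{hyp.total.work}, and the counting you do (at least $m$ \packs since at most three of the $3m$ tasks fit per \pack, at most $m$ \packs since each costs at least $1$) is essentially the paper's argument for \kinp[3], which uses the two-valued profile $t_{i,j}=1+\frac{1}{a_i}$ for $j<a_i$ instead; your area/tightness bound is a valid alternative engine. The genuine gap is exactly the one you flag and then do not close: the theorem claims hardness for \emph{every} $k\ge 3$, and for $k\ge 4$ your proposed anti-stretching transformations do not work as stated. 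First, the rescaling $\hat a_i=Ra_i+1$, $\hat B=RB+3$ with $R$ a multiple of $2,\dots,k$ only guarantees $\gcd(\hat a_i,R)=1$, i.e., it kills stretch factors $c_P=r/s$ whose numerator has a prime factor at most $k$. But tightness only forces $r\mid \hat a_i$ for all $i\in P$ with $\gcd(r,s)=1$ and $s\mid \hat B$, and a \emph{rational} stretch with prime numerator $q>k$ and denominator $s\in(2q/k,\,q)$ remains arithmetically possible: e.g., for $k=4$, $q=5$, $s=3$, four tasks with $5\mid\hat a_i$, each in $(\hat B/4,\hat B/2)$, summing to $\frac{5}{3}\hat B$ is consistent with all your constraints whenever $3\mid\hat B$ and $a_i\equiv 2 \pmod 5$. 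Second, the pairwise-coprimality route fails outright on \tpart instances with repeated values ($a_i=a_j$ gives $\hat a_i=\hat a_j$), and no strongly NP-complete variant of \tpart with pairwise coprime, polynomially bounded integers is available off the shelf --- producing one while preserving the triplet structure would itself be a nontrivial result. So the backward direction for $k\ge 4$ is unproven, and this is not a removable technicality: it is the crux of extending the reduction.

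The paper closes this case with a different and much cleaner device that you should adopt: \emph{filler tasks}. It sets $p=B+(k-3)(B+1)$, keeps the deadline $D=n$, and adds $n(k-3)$ identical tasks with $t_{i,j}=\max\bigl(\frac{B+1}{j},1\bigr)$, each of which needs at least $B+1$ processors to finish in time $1$. In a cost-$n$ schedule there are exactly $n$ \packs of cost $1$; a simple processor count shows no \pack can hold $k-2$ fillers (that would require $(k-2)(B+1)>p$ processors), so each \pack holds exactly $k-3$ fillers, occupying at least $(k-3)(B+1)$ processors and leaving at most $B$ processors for exactly $3$ original tasks --- which reduces the analysis verbatim to the $k=3$ case. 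The fillers make over-stuffed \packs infeasible by resource exhaustion rather than by arithmetic obstructions on the $a_i$, which is why the paper never has to reason about common divisors or stretch factors at all; your constant-work gadget for the original tasks would combine with this filler construction without difficulty.
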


\begin{proof}
We prove the NP-completeness of the decision problem associated to \kinp: given $n$ independent 
tasks, $p$ processors, a set of execution times $t_{i,j}$ for $1\leq i \leq n$ and $1\leq j \leq p$ 
satisfying Equations~\eqref{hyp.monotony} and~\eqref{hyp.total.work}, a fixed constant $k\leq p$ 
and a deadline~$D$, can we find a co-schedule with at most $k$ tasks per \pack, and whose 
execution time does not exceed~$D$?
The problem is obviously in NP: if we have the composition of every \pack, and for each task in a 
\pack, the number of processors onto which it is assigned, we can verify in 
polynomial time: (i) that it is indeed a \pack schedule; (ii) that the execution time is smaller than a 
given deadline.

We first prove the strong completeness of \kinp[3]. We use a reduction from \tpart. 
Consider an arbitrary instance $\II_1$ of \tpart: 
given an integer $B$ and $3n$ integers $a_1,\ldots,a_{3n}$, can we partition the $3n$ integers 
into $n$ triplets, each of sum $B$? We can assume that $\sum_{i=1}^{3n} a_{i} = nB$, 
otherwise  $\II_1$ has no solution.
The \tpart problem is NP-hard in the strong sense~\cite{GareyJohnson},
which implies that we can encode all integers ($a_1$, \ldots, $a_{3n}$, $B$)  in unary.
We build the following instance $\II_2$ of \kinp[3]: the number of processors is $p=B$, the 
deadline is $D=n$, there are $3n$ tasks $T_i$, with the following execution times: 
for all $i,j$, if $j<a_i$ 
then $t_{i,j} =1 + \frac{1}{a_i}$, otherwise $t_{i,j} = 1$.
It is easy to check that  Equations~\eqref{hyp.monotony} and~\eqref{hyp.total.work} are both satisfied.
For the latter, since there are only two possible execution times for each task,
we only need to check Equation~\eqref{hyp.total.work} for $j=a_{i}-1$, and we do obtain that 
$(a_{i}-1)(1 + \frac{1}{a_i})  \leq a_{i}$.  Finally, $\II_2$ has a size 
polynomial in the size of  $\II_1$, even if we write all instance parameters in unary:
the execution time is $n$, and the $t_{i,j}$ have the same size
as the~$a_i$.
 
We now prove that $\II_1$ has a solution if and only if $\II_2$ does.
Assume first that $\II_1$ has a solution. 
For each triplet $(a_i,a_j,a_k)$ of $\II_1$, we create a \pack with the three tasks $(T_i,T_j,T_k)$
where $T_i$ is scheduled on $a_i$ processors, $T_j$  on $a_j$ processors, and
$T_k$ on $a_k$ processors. By definition, we have $a_i+a_j+a_k = B$, and the 
execution time of this \pack is $1$. We do this for the $n$ triplets, which gives a valid
co-schedule whose total execution time~$n$. Hence the solution to $\II_2$.

Assume now that $\II_2$ has a solution. The minimum execution time for any \pack is $1$ (since it is the minimum execution time of any 
task and a \pack cannot be empty). Hence the solution cannot have more than $n$ \packs. 
Because there are $3n$ tasks and the number of elements in a \pack is limited to three, there are
exactly $n$ \packs, each of exactly $3$ elements, and furthermore all these \packs have an execution time of 
$1$ (otherwise the deadline $n$ is not matched). If there were a \pack $(T_i,T_j,T_k)$ such that
$a_i+a_j+a_k>B$, then one of the three tasks, say $T_{i}$, would have to use fewer than $a_{i}$ processors, hence would have
an execution time greater than $1$.
Therefore, for each \pack $(T_i,T_j,T_k)$, we have $a_i+a_j+a_k\leq B$. The fact that this inequality is an equality 
for all \packs follows from the fact that $\sum_{i=1}^{3n} a_{i} = nB$.
Finally, we conclude by saying that the set of triplets $(a_i,a_j,a_k)$ for every \pack $(T_i,T_j,T_k)$ is a solution to $\II_1$.

The final step is to prove the completeness of  \kinp[k] for a given $k \geq 4$.
We perform a similar reduction from the same instance $\II_{1}$ of \tpart.
We construct the instance $\II_2$ of \kinp where the number of processors is $p=B +(k-3)(B+1)$ and the 
deadline is~$D=n$. There are $3n$ tasks $T_i$ with the same execution times as before (for $1\leq i\leq 3n$, if $j<a_i$ then $t_{i,j} =1 + \frac{1}{a_i}$, otherwise $t_{i,j} = 1$), and also $n(k-3)$ new identical tasks such that, for $3n+1\leq i \leq kn$,  $t_{i,j}=\max \left( \frac{B+1}{j}, 1 \right )$. It is easy to check that  Equations~\eqref{hyp.monotony} and~\eqref{hyp.total.work} are also fulfilled for the new tasks.
If $\II_1$ has a solution, we construct the solution to $\II_2$ similarly to the previous reduction, 
and we add to each \pack $k-3$ tasks $T_i$ with $3n+1 \leq i \leq kn$, each assigned to $B+1$ processors. This solution has an execution time
exactly equal to $n$.
Conversely, if $\II_2$ has a solution, we can verify that there are exactly $n$ \packs (there are 
$kn$ tasks and each \pack has an execution time at least equal to~$1$). Then we can verify that there are
at most $(k-3)$ tasks~$T_i$ with $3n+1 \leq i \leq kn$ per \pack, since there are exactly $(k-3)(B+1) + B$ processors. Otherwise, if there were $k-2$ (or more) such tasks in a \pack, then one of them would be scheduled on less than $B+1$ processors, and the execution time of the \pack would be greater than~$1$. Finally, we can see that in~$\II_2$, each \pack is composed of $(k-3)$ tasks~$T_i$ with $3n+1 \leq i \leq kn$, scheduled
on $(k-3)(B+1)$ processors at least, and that there remains triplets of tasks~$T_i$, with $1\leq i \leq 3n$, scheduled on at most $B$ processors. The end of the proof is identical to the reduction in the case $k=3$. 
\end{proof}

Note that the \kinp[3] problem is NP-complete, and the \kinp[2] problem can be solved in polynomial time, hence \pinp[3] is the simplest problem whose complexity remains open.

\subsection{Scheduling a \pack of tasks}
\label{sec.pack}

In this section, we discuss how to optimally schedule a set of $k$~tasks in a single \pack: the $k$ tasks 
$T_1,\ldots,T_k$ are given, and we search for an assignment function
 $\sigma: \{1,\ldots,k\} \rightarrow \{1,\ldots,p\}$ such that $\sum_{i=1}^k \sigma(i) \leq p$, where 
 $\sigma(i)$ is the number of processors assigned to task~$T_i$. 
Such a schedule is called a \bs, and its \emph{cost} is $\max_{1\leq i\leq k} t_{i,\sigma(i)}$.
In Algorithm~\ref{algo.opt.pack} below, we use the notation $T_{i} \leqb T_{j}$ if 
$t_{i,\sigma(i)} \leq t_{j,\sigma(j)}$:

\begin{algorithm}[htb]
\caption{Finding the optimal \bs $\sigma$ of $k$ tasks in the same \pack.}
\label{algo.opt.pack}
procedure {Optimal-\bs}($T_1, \ldots,T_k$)\\
\Begin{
\For{$i=1$ to $k$}{
$\sigma(i) \leftarrow 1$
}
Let $L$ be the list of tasks sorted in non-increasing values of \leqb\;
$p_{\text{available}} := p-k$\;
\While{$p_{\text{available}} \neq 0$}{
$T_{i^{\star}}:=\text{head}(L)$\;
$L:=\text{tail}(L)$\;
$\sigma(i^{\star}) \leftarrow \sigma(i^{\star}) +1$\;
$p_{\text{available}} := p_{\text{available}} -1$\;
$L:=$ Insert $T_{i^{\star}}$ in $L$ according to its \leqb value\;
}
\Return{$\sigma$;}
}
\end{algorithm}

\begin{theorem}
\label{th.1bs}
Given $k$ tasks to be scheduled on $p$ processors in a single \pack, 
Algorithm~\ref{algo.opt.pack} finds a \bs of minimum cost in time~$O(p \log(k))$.
\end{theorem}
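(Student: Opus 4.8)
The plan is to split the proof into a \emph{correctness} part (the returned assignment has minimum cost) and a \emph{complexity} part (the $O(p\log k)$ bound), and I expect essentially all of the difficulty to lie in correctness.

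First I would fix notation: let $\sigma$ be the assignment returned by Algorithm~\ref{algo.opt.pack}, and note that since $k\le p$ the initial assignment $\sigma\equiv 1$ is legal, and the while loop hands out the remaining $p-k$ processors one at a time, so the output satisfies $\sum_{i=1}^k \sigma(i)=p$. Write $C_g=\max_{1\le i\le k} t_{i,\sigma(i)}$ for its cost. The single most useful preliminary observation, coming straight from Equation~\eqref{hyp.monotony}, is that giving a processor to any task cannot increase its execution time; hence along the run of the algorithm the \emph{current maximum cost is non-increasing}. The goal is then to show that every feasible \bs $\tau$ (any $\tau$ with $\tau(i)\ge 1$ and $\sum_i\tau(i)\le p$) satisfies $\max_i t_{i,\tau(i)}\ge C_g$.

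The crux---and the step I expect to be the main obstacle---is the following claim about the greedy rule: \emph{if a task $T_m$ ever receives an extra processor (so $\sigma(m)\ge 2$), then $C_g\le t_{m,\sigma(m)-1}$.} To prove it I would look at the last iteration in which $T_m$ is selected. At that instant $T_m$ is the head of $L$ for the \leqb order, i.e.\ a task of maximum current cost, and its current cost is exactly $t_{m,\sigma(m)-1}$ (it holds $\sigma(m)-1$ processors just before being incremented for the last time); therefore \emph{every} task has cost at most $t_{m,\sigma(m)-1}$ at that moment. Combining this with the monotonicity observation above---costs only go down during the remaining iterations---yields $C_g\le t_{m,\sigma(m)-1}$, as claimed. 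The care needed here is purely in handling the selection rule and possible ties, which is exactly why I would isolate it as a lemma.

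With the claim in hand the conclusion is short. Given a feasible $\tau$, if $\tau=\sigma$ its cost is $C_g$; otherwise, since $\sum_i\tau(i)\le p=\sum_i\sigma(i)$, a counting argument forces some task $T_m$ with $\tau(m)<\sigma(m)$ (were $\tau(i)\ge\sigma(i)$ for all $i$ with one strict inequality, the total would exceed $p$). Then $\sigma(m)\ge 2$, so the claim gives $C_g\le t_{m,\sigma(m)-1}$, while $\tau(m)\le\sigma(m)-1$ together with Equation~\eqref{hyp.monotony} gives $t_{m,\tau(m)}\ge t_{m,\sigma(m)-1}\ge C_g$; hence $\max_i t_{i,\tau(i)}\ge C_g$ and $\sigma$ is optimal. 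Finally, for the running time I would observe that the list $L$ should be realized as a max-heap keyed by the current cost $t_{i,\sigma(i)}$: building it costs $O(k\log k)$, and the loop performs $p-k$ iterations, each doing one extract-max, one $O(1)$ key update, and one insertion, i.e.\ $O(\log k)$ per iteration. This totals $O(k\log k+(p-k)\log k)=O(p\log k)$, matching the claim.
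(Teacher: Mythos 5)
Your proof is correct and takes essentially the same route as the paper: your key claim (at the last iteration where $T_m$ is selected it sits at the head of $L$ with cost $t_{m,\sigma(m)-1}$, and the maximum cost is non-increasing thereafter, so $C_g \leq t_{m,\sigma(m)-1}$) is precisely the paper's Lemma specialized to the final iteration, your counting step producing an $m$ with $\tau(m)<\sigma(m)$ mirrors the paper's exchange argument, and the heap-based $O(p\log k)$ analysis is identical. The only cosmetic differences are that you argue via the non-increasing \emph{global} maximum rather than the paper's per-task monotonicity chain, and that you sidestep the paper's $t_{i,0}=+\infty$ convention by invoking the claim only when $\sigma(m)\geq 2$.
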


In this greedy algorithm, we first assign one processor to each task, and while there are processors 
that are not processing any task, we  select the task with the longest execution time and assign an 
extra processor to this task. 
Algorithm~\ref{algo.opt.pack} performs $p-k$ iterations to assign the extra processors. We denote by
$\si{\ell}$ the current value of the function $\sigma$ at the end of iteration~$\ell$. 
For convenience, we let $t_{i,0}=+\infty$ for $1\leq i \leq k$. We start with the following lemma:

\emph{Lemma}: At the end of iteration $\ell$ of Algorithm~\ref{algo.opt.pack}, let 
$T_{i^{\star}}$ be the first task of the sorted list, i.e., the task with longest execution time. Then, for all~$i$,  
$t_{i^{\star},\si{\ell}(i^{\star})} \leq t_{i,\si{\ell}(i)-1}$. 
\begin{proof}
Let $T_{i^{\star}}$ be the task with longest execution time at the end of iteration~$\ell$. 
For tasks such that $\si{\ell}(i)=1$, the result is obvious since $t_{i,0}=+\infty$. 
Let us consider any task~$T_i$ such that $\si{\ell}(i)>1$. 
Let $\ell'+1$ be the last iteration when a new processor was assigned to task~$T_i$:  
$\si{\ell'}(i)=\si{\ell}(i)-1$ and $\ell'<\ell$. By definition of iteration $\ell'+1$, task~$T_i$ was chosen because 
$t_{i,\si{\ell'}(i)}$ was greater than any other task, in particular 
$t_{i,\si{\ell'}(i)} \geq t_{i^{\star},\si{\ell'}(i^{\star})}$.
Also, since we never remove processors from tasks, we have $\si{\ell'}(i) \leq \si{\ell}(i)$ and $\si{\ell'}(i^{\star}) \leq \si{\ell}(i^{\star})$. 
Finally, $t_{i^\star,\si{\ell}(i^{\star})}\leq t_{i^\star,\si{\ell'}(i^{\star})} \leq t_{i,\si{\ell'}(i)} =  t_{i,\si{\ell}(i)-1}$.
\end{proof}
We are now ready to prove Theorem~\ref{th.1bs}.

\begin{proof}[of Theorem~\ref{th.1bs}]
Let $\sigma$ be the \bs returned by Algorithm~\ref{algo.opt.pack} of cost $c(\sigma)$, 
and let $T_{i^{\star}}$ be a task such that  $c(\sigma)=t_{i^\star,\sigma(i^\star)}$.
Let $\sigma'$ be a \bs of cost $c(\sigma')$. We prove below that  $c(\sigma') \geq c(\sigma)$, 
hence $\sigma$ is a \bs of minimum cost:
\begin{compactitem}
	\item If $\sigma'(i^{\star}) \leq \sigma(i^{\star})$, then $T_{i^{\star}}$  has fewer processors 
	in $\sigma'$ than in $\sigma$, hence its execution time is larger, and $c(\sigma') \geq c(\sigma)$. 
	\item If $\sigma'(i^{\star}) > \sigma(i^{\star})$, then there exists~$i$ such that 
$\sigma'(i) < \sigma(i)$ (since the total number of processors is~$p$ in both $\sigma$ and $\sigma'$). We can apply the previous  Lemma at the end of the last iteration, where $T_{i^{\star}}$ is the task of maximum execution time: 
$t_{i^{\star},\sigma(i^{\star})} \leq t_{i,\sigma(i)-1} \leq t_{i,\sigma'(i)}$, and therefore $c(\sigma') \geq c(\sigma)$.
\end{compactitem}
Finally, the time complexity is obtained as follows: first we sort $k$ elements, in time $O(k \log k)$. Then there are $p-k$ iterations, and at each iteration, we insert an element in a sorted list of $k-1$ elements, which takes $O(\log k)$ operations (use a heap for the data structure of $L$). 
\end{proof}

Note that it is easy to compute an optimal \bs using a dynamic-programming algorithm: the optimal cost is 
$c(k,p)$, which we compute using the recurrence formula
$$c(i,q)=\min_{1\leq q' \leq q} \{ \max(c(i-1,q-q'),t_{i,q'}) \}$$ for $2\leq i\leq k$ 
and $1\leq q\leq p$, initialized by $c(1,q)=t_{1,q}$, and $c(i,0)=+\infty$. 
The complexity of this algorithm is $O(kp^2)$. However, we can significantly reduce the complexity of this algorithm by using Algorithm~\ref{algo.opt.pack}.

\subsection{Computing the optimal solution}
\label{sec.ilp}

In this section we sketch two methods to find the optimal solution to the general \kinp problem.
This can be useful to solve some small-size instances, albeit at the price of a cost exponential 
in the number of tasks $n$.

The first method is to generate all possible partitions of the tasks into packs. This amounts to computing
all partitions of $n$ elements into subsets of cardinal at most $k$. For a given partition 
of tasks into packs, we use Algorithm~\ref{algo.opt.pack} to find the optimal processor
assignment for each pack, and we can compute the optimal cost for the partition. There remains to take the
minimum of these costs among all partitions.

The second method is to cast the problem in terms of an integer linear program:
\begin{theorem}
The following integer linear program characterizes the \kinp problem, where the unknown 
variables are the $x_{i,j,b}$'s (Boolean variables) and the $y_b$'s (rational variables), for $1 \leq i,b \leq n$ and  $1 \leq j \leq p$: 
\begin{eqnarray}
\label{lin.prog.kinp}
\begin{array}{ll}
\text{Minimize} \sum_{b=1}^{n}  y_b & \text{subject to}\\
\text{(i) } \sum_{j,b} x_{i,j,b} = 1, & 1\leq i \leq n\\
\text{(ii) } \sum_{i,j} x_{i,j,b} \leq k, & 1\leq b \leq n\\
\text{(iii) } \sum_{i,j} j \times x_{i,j,b} \leq p, & 1\leq b \leq n\\
\text{(iv) } x_{i,j,b}\times t_{i,j} \leq y_b, & 1\leq i,b \leq n, 1 \leq j \leq p\\
\end{array}
\end{eqnarray}
\end{theorem}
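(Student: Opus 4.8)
The plan is to establish a cost-preserving correspondence between feasible co-schedules and feasible integer solutions of the program, and then argue that the two optimization problems share the same optimal value. First I would fix the intended meaning of the variables: $x_{i,j,b}=1$ records that task $T_i$ belongs to pack number $b$ and is run on exactly $j$ processors, while $y_b$ represents the cost of pack $b$. Since any co-schedule partitions the $n$ tasks into at most $n$ nonempty packs, indexing packs by $b\in\{1,\ldots,n\}$ and allowing some of them to be empty loses no generality; this observation is what justifies restricting the range of the index $b$.

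For the first direction, given a co-schedule with packs $P_1,\ldots,P_m$ (so $m\leq n$) and processor assignment $\sigma$, I would set $x_{i,j,b}=1$ exactly when $T_i\in P_b$ and $\sigma(i)=j$, set $y_b=\max_{T_i\in P_b} t_{i,\sigma(i)}$ for each nonempty pack, and set $y_b=0$ for the empty slots. Then I would check the four constraints in turn: (i) holds because each task lies in exactly one pack with exactly one processor count; (ii) is the at-most-$k$-tasks-per-pack rule; (iii) is the budget $\sum_{T_i\in P_b}\sigma(i)\leq p$; and (iv) holds because $y_b$ is by construction the maximum task time over pack $b$. The objective $\sum_b y_b$ then equals the cost of the co-schedule.

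For the converse, starting from any feasible integer solution I would read off a co-schedule by letting pack $b$ consist of those tasks $T_i$ for which some $x_{i,j,b}=1$, with $\sigma(i)$ equal to the corresponding $j$; the uniqueness of the pair $(j,b)$ for each $i$ is precisely constraint (i), which guarantees a well-defined partition and a single processor count per task. Constraints (ii) and (iii) then guarantee that this is a legal \kinp co-schedule, and constraint (iv) gives $\max_{T_i\in P_b} t_{i,\sigma(i)}\leq y_b$ for every $b$, so the cost of the co-schedule is at most $\sum_b y_b$. Combining the two directions, the optimal co-schedule cost is at most the optimal value of the program (from the first direction) and at least it (from the second), hence the two coincide; moreover, at an optimum the minimization drives each $y_b$ down to the exact pack cost.

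All the verifications are routine, so the only point requiring a moment of care is the bound on the number of packs, namely that restricting $b$ to $1\leq b\leq n$ excludes no co-schedule, together with the treatment of empty packs, which must contribute $0$ to the objective; this is handled by the fact that constraint (iv) reduces to $0\leq y_b$ when no task is placed in pack $b$. I expect this to be the main, and quite mild, obstacle: the encoding itself is a direct transcription of the co-scheduling constraints into linear form.
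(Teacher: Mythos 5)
Your proposal is correct and follows essentially the same route as the paper's proof: the same interpretation of the variables $x_{i,j,b}$ and $y_b$, the same justification that at most $n$ packs suffice (with empty packs contributing $y_b=0$), and the same constraint-by-constraint verification. You merely spell out both directions of the cost-preserving correspondence more explicitly than the paper, which states the encoding and checks the constraints in a single pass.
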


\begin{proof}
The $x_{i,j,b}$'s are such that $x_{i,j,b}=1$ if and only if task~$T_i$ is in the \pack~$b$ and it is executed on $j$~processors; $y_b$ is the execution time of \pack~$b$. Since there are no more than $n$ \packs (one task per \pack), $b\leq n$. The sum $\sum_{b=1}^{n}  y_b$ is therefore the total execution time ($y_b=0$ if there are no tasks in \pack~$b$). 
Constraint~(i) states that each task is assigned to exactly one \pack~$b$, and with one number of processors~$j$. Constraint~(ii) ensures that there are not more than $k$~tasks in a \pack. Constraint~(iii) adds up the number of processors in \pack~$b$, which should not exceed~$p$. Finally, constraint~(iv) computes the cost of each \pack.  
\end{proof}

\subsection{Approximation algorithm}
\label{sec.approx}
\label{sec.makepack}

In this section we introduce \ok, a $3$-approximation algorithm for the \pinp problem. 
The design principle of \ok is the following: we start from the assignment 
where each task is executed on one processor, and use Algorithm~\ref{algo.makepack} to build a 
first solution.  Algorithm~\ref{algo.makepack} is a greedy heuristic that builds a co-schedule
when each task is pre-assigned a number of processors for execution. 
Then we iteratively refine the solution, adding a processor to the task with longest execution time,
and re-executing Algorithm~\ref{algo.makepack}. Here are details on both algorithms:

\emph{Algorithm~\ref{algo.makepack}.} The   \kinp problem with processor pre-assignments remains strongly NP-complete (use a similar reduction as
in the proof of Theorem~\ref{thm.kinp}). We propose a greedy procedure
in Algorithm~\ref{algo.makepack} which is similar to the First Fit Decreasing Height algorithm
for strip packing~\cite{coffman1980performance}. The output is a co-schedule
with at most $k$ tasks per pack, and the complexity is $O(n \log(n))$ (dominated by sorting).

\emph{Algorithm~\ref{algo.optimal.known}.} 
We iterate the calls to Algorithm~\ref{algo.makepack}, adding a processor to the task with longest 
execution time, until: (i) either the task of longest execution time is already assigned $p$ processors, 
or (ii) the sum of the work of all tasks is greater than $p$ times the longest execution time. The 
algorithm returns the minimum cost found during execution. The complexity of this algorithm is 
$O(n^2p)$ (in the calls to Algorithm~\ref{algo.makepack} we do not need to re-sort the list but 
maintain it sorted instead) in the simplest version presented here, but can be reduced to 
$O(n\log(n) + np)$ using standard algorithmic techniques.

\begin{algorithm}[htb]
\caption{Creating \packs of size at most $k$, when the number $\sigma(i)$ of processors per task $T_{i}$ is fixed.}
\label{algo.makepack}
procedure {\makebatch}($n,p,k,\sigma$)\\
\Begin{
Let $L$ be the list of tasks sorted in non-increasing values of execution times $t_{i,\sigma(i)}$\;
\While{$L \neq \emptyset$}{
  Schedule the current task on the first \pack with enough available processors and fewer than $k$ tasks. 
  Create a new \pack if no existing \pack fits\;
  Remove the current task from~$L$\;	
}
\Return{the set of \packs}
}
\end{algorithm}

\begin{algorithm}[htb]
\caption{\ok\label{algo.ok}}
\label{algo.optimal.known}
procedure {\ok}($T_1, \ldots,T_n$)\\
\Begin{

$\MS = +\infty$ \;

\lFor{$j=1$ to $n$}{
$\sigma(j) \leftarrow 1$\; }

\For{$i=0$ to $n(p-1)-1$}{
Let $A_{\text{tot}}(i) = \sum_{j=1}^n t_{j,\sigma(j)} \sigma(j)$\;
Let $T_{j^{\star}}$ be one task that maximizes $t_{j,\sigma(j)}$\; 
Call \makebatch($n,p,p,\sigma$)\; 

Let $\MS_i$ be the cost of the co-schedule\;
\lIf{$\MS_i< \MS
$}{$\MS \leftarrow \MS_i$\;}
\lIf{  ($\frac{A_{\text{tot}}(i)}{p}>t_{j^{\star},\sigma(j^{\star})}$ ) or ($\sigma(j^{\star})=p$)}{\Return{$\MS$;} \tcc{Exit loop}}
\lElse{$\sigma(j^{\star})\leftarrow \sigma(j^{\star})+1$;   \tcc{Add a processor to $T_{j^{\star}}$}}
}

\Return{$\MS$;}
}
\end{algorithm}

\begin{theorem}
\ok is a 3-approximation algorithm for the \pinp problem.
\end{theorem}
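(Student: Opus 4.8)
The plan is to combine a classical shelf-packing area bound for \makebatch with two easy lower bounds on \opt, and then to charge the cost of \ok against \opt at one carefully chosen iteration of its main loop. Throughout, for a processor assignment $\sigma$ I write $M(\sigma)=\max_i t_{i,\sigma(i)}$ for the cost of the tallest task and $W(\sigma)=\sum_i \work(i,\sigma(i))$ for the total \work. First I would prove the shelf bound for \makebatch. Since \ok calls it with $k=p$, the per-\pack task limit is never binding (each task uses at least one processor), so a task opens a new \pack only when the processors already used plus its own width exceed $p$. Pairing each \pack with the first task of the next one, and using that every task in a \pack is at least as tall as the top of the next \pack (tasks are sorted by non-increasing height), I obtain $\sum_{b\geq 2} h_b < 2W(\sigma)/p$, where $h_b$ is the cost of \pack $b$; adding the tallest \pack $h_1=M(\sigma)$ gives that the cost of \makebatch is at most $M(\sigma)+2W(\sigma)/p$.

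Next I would record the two lower bounds, both immediate. We have $\opt\geq \max_i t_{i,p}$, since the \pack holding any task costs at least its fully-parallel time; and $\opt\geq \frac{1}{p}\sum_i \work(i,\sigma_{\opt}(i))$ by a disjoint-processor area argument, as each \pack of cost $c_b$ holds total \work at most $p\,c_b$, where $\sigma_{\opt}$ is an optimal assignment. In view of the shelf bound, it now suffices to exhibit a single iteration of \ok, with assignment $\sigma$, at which \emph{both} $M(\sigma)\leq\opt$ and $W(\sigma)\leq p\,\opt$ hold: the returned value, being the minimum of \makebatch's cost over all iterations, is then at most $M(\sigma)+2W(\sigma)/p\leq 3\,\opt$.

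The heart of the argument is choosing that iteration and showing the loop actually reaches it. I would let $\sigma^\dagger$ be the assignment at the \emph{first} iteration whose tallest task satisfies $M\leq\opt$, so $M(\sigma^\dagger)\leq\opt$ by construction. The key lemma is that at this iteration every task obeys $\sigma^\dagger(i)\leq\sigma_{\opt}(i)$: a task gains a processor only while it is the current bottleneck, and the last time task $i$ was fed the then-current cost was $M>\opt\geq t_{i,\sigma_{\opt}(i)}$ (that iteration precedes the first drop to $\opt$), so by monotonicity of $t_{i,\cdot}$ its count just before that feeding was $<\sigma_{\opt}(i)$, whence $\sigma^\dagger(i)\leq\sigma_{\opt}(i)$. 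Summing and using the non-decreasing \work assumption gives $W(\sigma^\dagger)\leq\sum_i\work(i,\sigma_{\opt}(i))\leq p\,\opt$, the second required inequality. The same processor-domination bound shows that at every earlier iteration $W/p\leq\opt<M$, and no bottleneck task has yet reached $p$ processors (else its cost would be $\leq\max_i t_{i,p}\leq\opt$), so neither stopping test of \ok can fire before $\sigma^\dagger$; hence $\sigma^\dagger$ is genuinely evaluated and its cost enters the minimum.

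I expect the main obstacle to be exactly this coupling. The shelf bound forces me to control $M(\sigma)$ and $W(\sigma)$ at the \emph{same} assignment, yet along the loop $M$ only decreases while $W$ only increases, so the two quantities are in tension and no fixed iteration is visibly good. The resolution is the processor-domination lemma, which pins the crossing iteration $\sigma^\dagger$ against $\sigma_{\opt}$ and certifies both bounds simultaneously. Some care is still needed at the boundary case $\sigma^\dagger=\sigma_0$ (all tasks on one processor), where the lemma holds trivially, and in checking that the first-fit ``did-not-fit'' inequality $w_b+f_{b+1}>p$ survives the fact that later, shorter tasks may still be added to earlier \packs (it does, since \pack widths only grow).
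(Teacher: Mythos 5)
Your proposal is correct and takes essentially the same route as the paper's own proof: your shelf bound for \makebatch is the paper's area lemma, your two lower bounds are its lemma on the optimal solution, and your crossing iteration $\sigma^\dagger$ together with the processor-domination lemma and the check that neither exit test can fire earlier reproduces, step for step, the paper's choice of the step $i_0$ and the chain of lemmas around it. The only difference is presentational: you fold the paper's two domination statements (one valid at every iteration where $t_{\max}$ still exceeds the optimal cost, and one for the crossing step itself) into a single lemma.
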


\begin{proof}
We start with some notations: 
\begin{compactitem}
	\item step $i$ denotes the $i^{th}$ iteration of the main loop of Algorithm~\ok;
	\item $\si{i}$ is the allocation function at step $i$;
	\item $t_{\max}(i) = \max_{j} t_{j,\si{i}(j)}$ is the maximum execution time of any task at step $i$;
	\item $j^{\star}(i) $ is the index of the  task with longest execution time at step $i$ (break ties arbitrarily);
	\item $A_{\text{tot}}(i) =\sum_j t_{j,\si{i}(j)} \si{i}(j)$ is the total work that has to be done at step $i$;
	\item $\MS_i$ is the result of the scheduling procedure at the end of step $i$;
	\item $\opt$ denotes an optimal solution, with allocation function $\si{\opt}$, 
	execution time~$\MS_\opt$, and total work 
$$A_{\opt} = \sum_j t_{j,\si{\opt}(j)} \si{\opt}(j).$$ 
\end{compactitem}

\medskip
\noindent Note that there are three different ways to exit algorithm~\ok:
\begin{compactenum}
	\item If we cannot add processors to the task with longest execution time, i.e.,  
$\si{i}(j^{\star}(i))=p$;
	\item If $\frac{A_{\text{tot}}(i)}{p}>t_{\max}(i)$ after having computed the execution time for 
this assignment;
	\item When each task has been assigned $p$ processors (the last step of the loop ``for": we 
have assigned exactly $np$ processors, and no task can be assigned more than $p$ processors).
\end{compactenum}

\begin{lemma}
	\label{lem.3}
At the end of step $i$, $\MS_i \!\leq\! 3 \! \max \!\left( \! t_{\max}(i),\!\frac{A_{\text{tot}}(i)}{p} \!\right)$.
\end{lemma}

\begin{proof}
Consider the \packs returned by Algorithm~\ref{algo.makepack}, 
sorted by non-increasing execution times, $B_1,B_2,\ldots, B_n$ (some of the \packs
may be empty, with an execution time~$0$). 
Let us denote, for $1\leq q\leq n$,
\begin{compactitem}
	\item $j_q$ the task with the longest execution time of \pack $B_q$ (i.e., the first task scheduled on 
$B_q$);
	\item $t_q$ the execution time of \pack $B_q$ (in particular, $t_q \!=\! t_{j_q,\si{i}(j_q)}$\!);
	\item $A_q$ the sum of the task works in \pack $B_q$;
	\item $p_q$ the number of processors available in \pack $B_q$ when $j_{q+1}$ was 
scheduled in \pack $B_{q+1}$.
\end{compactitem}

With these notations, $\MS_i = \sum_{q=1}^n t_q$ and $A_{\text{tot}}(i) = \sum_{q=1}^n \!A_q$.
For each \pack, note that $p t_q \geq A_q$, since $p t_q$ is the maximum work that can be done
on $p$ processors with an execution time of~$t_q$.
Hence, $\MS_i \geq \frac{A_{\text{tot}}(i)}{p}$.

In order to bound $\MS_i$, let us first remark that $\si{i}(j_{q+1}) > p_q$:
otherwise $j_{q+1}$ would have been scheduled on \pack $B_q$.
Then, we can exhibit a lower bound for $A_q$, namely $A_q \geq t_{q+1} (p-p_q)$. Indeed, the 
tasks scheduled before $j_{q+1}$ all have a length greater than $t_{q+1}$ by definition.
Furthermore, obviously $A_{q+1} \geq t_{q+1} p_q$ (the work of the first task scheduled in \pack 
$B_{q+1}$). So finally we have, $A_q + A_{q+1} \geq  t_{q+1} p$.

Summing over all $q$'s, we have: $2\sum_{q=1}^n \frac{A_q}{p} \geq \sum_{q=2}^n t_q$, hence
$2\frac{A_{\text{tot}}(i)}{p} +t_1 \geq \MS_i$. Finally, note that $t_1 = t_{\max}(i)$, and therefore
$\MS_i \leq 3 \max \left (t_{\max}(i),\frac{A_{\text{tot}}(i)}{p} \right )$.
Note that this proof is similar to the one for the Strip-Packing problem 
in~\cite{coffman1980performance}.
\end{proof}

\begin{lemma}
	\label{lem.monot}
At each step~$i$, $A_{\text{tot}}(i+1) \geq A_{\text{tot}}(i)$ and $t_{\max}(i+1)\leq t_{\max}(i)$, i.e., 
the total work is increasing and the maximum execution time is decreasing. 
\end{lemma}

\begin{proof}
$A_{\text{tot}}(i+1) = A_{\text{tot}}(i)- a+b$, where
\begin{compactitem}
\item $a=\work(j^{\star}(i),\si{i}(j^{\star}(i))) $, and
\item $b=\work(j^{\star}(i),\si{i+1}(j^{\star}(i)))$.
\end{compactitem}
But $b =\work(j^{\star}(i),\si{i}(j^{\star}(i))+1)$
and $a \leq b$ by Equation~\eqref{hyp.total.work}.
Therefore, $A_{\text{tot}}(i+1) \geq A_{\text{tot}}(i)$.
Finally, $t_{\max}(i+1)\leq t_{\max}(i)$ since only one of the tasks with the longest 
execution time is modified, and its execution time can only decrease thanks 
to Equation~\eqref{hyp.monotony}.
\end{proof}


\begin{lemma}
	\label{prop.opt}
Given an optimal solution \opt, 
$\forall j, t_{j,\si{\opt}(j)} \leq \MS_\opt$ and 
$A_{\opt} \leq p\MS_\opt$.
\end{lemma}

\begin{proof}
The first inequality is obvious. As for the second one, $p\MS_\opt$ is the maximum 
work that can be done on $p$ processors within an execution time of $\MS_\opt$, hence it must not be smaller than $A_{\opt}$, which is the sum of the work of the tasks with the optimal allocation.
\end{proof}

\begin{lemma}
	\label{lem.strict}
For any step~$i$ such that $t_{\max}(i) > \MS_{\opt}$, then $\forall j, \si{i}(j) \leq \si{\opt}(j)$, 
and $A_{\text{tot}}(i) \leq A_{\opt}$.
\end{lemma}
\begin{proof}
Consider a task $T_j$.
If $\si{i}(j) =1$, then clearly $\si{i}(j) \leq \si{\opt}(i)$.
Otherwise, $\si{i}(j) >1$, and then by definition of the algorithm, there was a step $i'<i$, such 
that $\si{i'}(j) = \si{i}(j) -1$ and $\si{i'+1}(j) = \si{i}(j)$. 
Therefore $t_{\max}(i') = t_{j,\si{i'}(j)}$.
Following Lemma~\ref{lem.monot}, we have $t_{\max}(i') \geq t_{\max}(i) > \MS_{\opt}$. 
Then necessarily, $\si{\opt}(j)>\si{i'}(j)$, hence the result.
Finally, $A_{\text{tot}}(i) \leq A_{\opt}$ is a simple corollary of the previous result and of 
Equation~\eqref{hyp.total.work}.
\end{proof}

\begin{lemma}
	\label{lem.keepon}
For any step $i$ such that $t_{\max}(i) > \MS_{\opt}$, then $\frac{A_{\text{tot}}(i)}{p} <t_{\max}(i)$.
\end{lemma}
\begin{proof}
Thanks to Lemma~\ref{lem.strict}, we have $\frac{A_{\text{tot}}(i)}{p} \leq \frac{A_{\opt}}{p}$. 
Lemma~\ref{prop.opt} gives us $\frac{A_{\opt}}{p} \leq \MS_{\opt}$, 
hence the result.
\end{proof}

\begin{lemma}
	\label{lem.exists_i0}
There exists $i_0 \geq 0$ such that $t_{\max}(i_0-1) > \MS_{\opt}\geq t_{\max}(i_0)$ 
(we let $t_{\max}(-1) = +\infty$).
\end{lemma}

\begin{proof}
We show this result by contradiction.
Suppose such $i_0$ does not exist. Then $t_{\max}(0) > \MS_{\opt}$ (otherwise $i_0=0$ would
suffice). Let us call $i_1$ the last step of the run of the algorithm. Then by induction we have the 
following property, $t_{\max}(0) \geq t_{\max}(1) \geq \cdots \geq t_{\max}(i_1) > \MS_{\opt}$
(otherwise $i_0$ would exist, hence contradicting our hypothesis).
Recall that there are three ways to exit the algorithm, hence three possible definitions for~$i_1$: 
\begin{itemize}
	\item $\si{i_1}(j^{\star}(i_1))= p$, however then we would have 
$t_{\max}(i_1)  = t_{j^{\star}(i_1),p}> \MS_{\opt} \geq t_{j^{\star}(i_1),\si{\opt}}$ 
(according to Lemma~\ref{prop.opt}). 
This contradicts Equation~\eqref{hyp.monotony}, which states that 
$t_{j^{\star}(i_1),p} \leq t_{j^{\star}(i_1),k}$ for all $k$.
	\item $i_1 = n(p-1) -1$, but then we have the same result, i.e., $\si{i_1}(j^{\star}(i_1))= p$ 
because this is true for all tasks.
	\item $t_{\max}(i_1) < \frac{A_{\text{tot}}(i_1)}{p}$, but this is false according to 
Lemma~\ref{lem.keepon}.
\end{itemize}
We have seen that \ok could not have terminated at step~$i_1$, however since \ok terminates 
(in at most $n(p-1)-1$ steps), we have a contradiction. Hence we have shown the existence of~$i_0$.
\end{proof}

\begin{lemma}
	\label{lem.atot_i0}
$A_{\text{tot}}(i_0) \leq A_{\opt}$.
\end{lemma}

\begin{proof}
Consider step $i_0$. 
If $i_0=0$, then at this step, all tasks are scheduled on exactly one 
processor, and $\forall j, \si{i_0}(j) \leq \si{\opt}(j)$. Therefore,  
$A_{\text{tot}}(i_0) \leq A_{\opt}$.
If $i_0 \neq 0$, consider  step $i_0-1$: $t_{\max}(i_0-1) > \MS_{\opt}$. 
From Lemma~\ref{lem.strict}, we have $\forall j, \si{i_0-1}(j) \leq \si{\opt}(j)$. 
Furthermore, it is easy to see that 
$\forall j \neq j^{\star}(i_0-1), \si{i_0}(j)=\si{i_0-1}(j)$ since no task other than 
$j^{\star}(i_0-1)$ is modified. We also have the following properties: 
\begin{itemize}
	\item $t_{j^{\star}(i_0-1),\si{i_0-1}( j^{\star}(i_0-1))}=t_{\max}(i_0-1)$;
	\item $t_{\max}(i_0-1) > t_{\opt}$ (by definition of step $i_0$);
	\item $t_{\opt} \geq t_{j^{\star}(i_0-1),\si{\opt}( j^{\star}(i_0-1))}$ (Lemma~\ref{prop.opt});
	\item $\si{i_0}( j^{\star}(i_0-1)) = \si{i_0-1}( j^{\star}(i_0-1)) +1$.
\end{itemize}
The three first properties and Equation~\eqref{hyp.monotony} allow us to say that $\si{i_0-1}( j^{\star}(i_0-1)) < \si{\opt}( j^{\star}(i_0-1))$. Thanks to the fourth property, 
$\si{i_0}( j^{\star}(i_0-1)) \leq \si{\opt}(j)$.
Finally, we have, for all $j, \si{i_0}(j)\leq \si{\opt}(j)$, and therefore 
$A_{\text{tot}}(i_0) < A_{\opt}$ by Equation~\eqref{hyp.total.work}.
\end{proof}

We are now ready to prove the theorem. For $i_0$ introduced in Lemma~\ref{lem.exists_i0}, we have:
\begin{align*}
\MS_{i_0} &\leq 3 \max \left ( t_{\max}(i_0),\frac{A_{\text{tot}}(i_0)}{p}\right ) \\
&\leq 3 \max \left ( \MS_{\opt},\frac{A_{\opt}}{p} \right)\\
& \leq 3 \MS_{\opt}
\end{align*}
The first inequality comes from Lemma~\ref{lem.3}. The second inequality is due to 
Lemma~\ref{lem.exists_i0} and~\ref{lem.atot_i0}. The last inequality comes from 
Lemma~\ref{prop.opt}, hence the final result. 
\end{proof}

\section{Heuristics}
\label{sec.heuristics}

In this section, we describe the heuristics that we use to solve the \kinp problem.

\noindent
\textbf{\randompack--} In this heuristic,  we generate the \packs randomly: as long as there remain tasks, randomly choose an integer $j$ between 1 and $k$, and then  randomly select $j$ tasks to form a \pack. Once the \packs are generated, apply Algorithm~\ref{algo.opt.pack} to optimally schedule each of them.

\noindent
\textbf{\randomproc--}  In this heuristic, we assign the number of processors to 
each task randomly between $1$ and~$p$, then use Algorithm~\ref{algo.makepack}   
to generate the \packs, followed by Algorithm~\ref{algo.opt.pack} on each \pack. 

\noindent
\textbf{A word of caution--} We point out that  \randompack and \randomproc are not pure random
heuristics, in that they already benefit from the theoretical results of Section~\ref{sec.theory}. A more 
naive  heuristic would pick both a task and a number of processor randomly, and greedily build packs,
creating a new one as soon as more than $p$ resources are assigned within the current pack. 
Here, both \randompack and  \randomproc use the optimal resource allocation strategy (Algorithm~\ref{algo.opt.pack}) within a pack; in addition,  \randomproc
uses an efficient partitioning algorithm (Algorithm~\ref{algo.makepack}) to create packs when resources
are pre-assigned to tasks.

\noindent
\textbf{\ok--} This heuristic is an extension of Algorithm~\ref{algo.ok} in Section~\ref{sec.makepack}
to deal with \packs of size $k$ rather than $p$: simply call \makebatch($n,p,k,\sigma$) instead of
\makebatch($n,p,p,\sigma$). However, although
we keep the same name as in Section~\ref{sec.makepack} for simplicity, we point out that
it is unknown whether
this heuristic is a $3$-approximation algorithm for arbitrary $k$.

\noindent
\textbf{\bbb($\varepsilon$)--} The rationale for this heuristic  
is to create \packs that are well-balanced: the difference between the smallest and  longest execution times in each \pack should be as small
as possible. Initially, we assign one processor per task (for $1\leq i \leq n$, $\sigma(i)=1$), and tasks are sorted into a list $L$ ordered by non-increasing execution times (\leqb values). 
While there remain some tasks in $L$, let $T_{i^\star}$ be the first task of the list, and let $t_{\max} = t_{i^\star,\sigma(i^\star)}$. Let $V_{req}$ be 
the ordered set of tasks $T_{i}$ such that $t_{i,\sigma(i)} \geq (1-\varepsilon)t_{\max}$:
this is the sublist of tasks (including $T_{i^\star}$ as its first element)
whose execution times are close
to the longest execution time $t_{\max}$, and $\varepsilon \in [0,1]$ is some parameter.
Let $p_{req}$ be the total number of processors requested by tasks in $V_{req}$. 
%
If $p_{req} \geq p$, a new \pack is created greedily with  the first tasks of $V_{red}$, 
adding them into the \pack while there are no more than $p$ processors used 
and no more than $k$~tasks in the \pack.
The corresponding tasks are removed from the list $L$. 
Note that $T_{i^\star}$ is always inserted in the created \pack. 
Also, if we have $\sigma(i^{\star})=p$, then a new \pack with only $T_{i^{\star}}$ is created.
Otherwise ($p_{req} < p$), 
an additional processor is assigned to the (currently) critical task $T_{i^\star}$, hence
$\sigma(i^{\star}):= \sigma(i^{\star})+1$, and the process iterates
after the list~$L$ is updated with the insertion of the new value for $T_{i^{\star}}$
Finally, once all \packs are created, we apply Algorithm~\ref{algo.opt.pack} in each \pack, so as to derive the optimal schedule within each \pack. 


We have $0<\varepsilon<1$. A small value of $\varepsilon$ will lead to balanced \packs, but may end up with a single task with $p$ processors per \pack. Conversely,  a large value of $\varepsilon$ will create new \packs more easily, i.e., with fewer processors per task. 
The idea is therefore to call the heuristic with different values of $\varepsilon$, and to select the solution that leads to the best execution time.


\noindent
\textbf{Summary of heuristics--} 
We consider two variants of the random heuristics, either with one single run, 
or with $9$ different runs, hence hoping to obtain a better solution, at the
price of a slightly longer execution time. These heuristics are denoted respectively
\randompack-1, \randompack-9, \randomproc-1, \randomproc-9. 
Similarly, for \bbb, we either use one single run with $\varepsilon=0.5$ (\bbb-1),
or $9$ runs with $\varepsilon \in \{.1, .2, \ldots, .9\}$ (\bbb-9). Of course,
there is only one variant of \ok, hence leading to seven heuristics.

\noindent
\textbf{Variants--} 
We have investigated variants of \bbb, trying to make a better choice than the
greedy choice to create the packs, for instance using a dynamic programming
algorithm to minimize processor idle times in the pack. However, there was
very little improvement at the price of a much higher running time of the heuristics. 
Additionally, we tried to improve heuristics with up to $99$ runs, both for the random ones
and for \bbb, but here again, the gain in performance was negligible compared 
to the increase in running time. Therefore we present 
only results for these seven heuristics in the following.

%
%

%

\section{Experimental Results}
\label{sec:exp}

In this section, we study the performance of the seven heuristics 
on workloads of parallel tasks. First we describe the workloads,
whose application execution times model profiles of parallel scientific codes.
Then we present the measures used to evaluate the quality of the schedules,
and finally we discuss the results. 


\noindent
\textbf{Workloads--} 
Workload-I
corresponds to $10$ parallel scientific applications  that involve
VASP~\cite{vasp}, ABAQUS~\cite{abaqus}, LAMMPS~\cite{lammps} and
Petsc~\cite{petsc}.  The execution times of these applications were
observed on a cluster with Intel Nehalem 8-core nodes connected by
a QDR Infiniband network with a total of $128$ cores. In other words,
we have  $p=16$ processors, and each processor is a multicore node.\\
Workload-II   is a {\em synthetic} test suite that was designed to
represent a larger set of scientific applications. It models tasks
whose parallel execution time for a fixed problem size $m$ on $q$
cores is of the form $t(m,q) =  f \times t(m,1) + (1-f) \frac{t(m,1)}{q}
+ \kappa(m,q)$, where $f$ can be interpreted as the inherently serial
fraction, and $\kappa$ represents overheads related to synchronization
and the communication of data. We consider tasks with sequential times
$t(m,1)$ of the form $cm$, $cm\log_2 n$, $cm^2$ and $cm^3$, where $c$ is
a suitable constant. We consider values of $f$ in $\{0, 0.04, 0.08,.16,
.32\}$, with overheads $\kappa(m,q)$ of the form $\log_2 q$, $(\log _2
q)^2$, $q \log_2 q$, $\frac{m}{q}\log_2 q$, $\sqrt{m/q}$, and $m \log_2 q$
to create a workload with $65$ tasks executing on up to $128$ cores.\\
The same process was also used to develop Workload-III, our largest
synthetic test suite with $260$ tasks for $256$ cores (and $p=32$
multicore nodes), to study the
scalability of our heuristics.  For all workloads, we modified speedup
profiles to satisfy Equations~\eqref{hyp.monotony} and~\eqref{hyp.total.work}.

As discussed in related work (see Section~\ref{sec.related}) and~\cite{syp},  
and confirmed by power measurement using Watts
Up Pro meters, we observed only minor power consumption variations of
less than 5\% when we limited co-scheduling to occur across multicore
nodes. 
Therefore, we only test configurations  where no more than a single application
can be scheduled on a given multicore node comprising 8 cores.
Adding a processor to an application $T_{i}$ which is already assigned $\sigma_{i}$ processors
actually means adding 8 new cores (a full
multicore node) to the $8 \sigma_{i}$ existing cores. Hence
a pack size of $k$ corresponds to the use of at most $8k$ cores for
applications in each pack. For Workloads-I and II, there are $16$ nodes and $128$ cores,
while Workload-III has up to $32$ nodes and $256$ cores.


\noindent
\textbf{Methodology for assessing the heuristics--}  
To evaluate the quality of the schedules generated by our heuristics,
we consider three measures: {\em Relative cost}, {\em Packing ratio}, and {\em Relative response time}.  
Recall that the cost of a pack is the maximum execution time of a task in that
pack and the cost of a co-schedule is the sum of the costs over all
its packs. 

We define the relative cost as the cost of a given co-schedule
divided by the cost of a 1-pack schedule, i.e., one with each task running
at maximum speed on all $p$ processors.  \\
For a given \kinp, consider  
$\sum_{i=1}^{n}{t_{i,\sigma(i)}\times \sigma(i)}$, i.e., the total work performed in
the co-schedule when the $i$-th task is assigned $\sigma(i)$ processors.
We define the packing ratio as this sum divided by $p$
times the cost of the co-schedule; observe that the packing quality is
high when this ratio is close to 1, meaning that there is almost no idle time
in the schedule. \\
An individual user could be
concerned about an increase in response time and a corresponding
degradation of individual productivity. To assess the impact on response
time, we consider the performance with respect to a relative
response time measure defined as follows. We consider a 1-pack schedule
with the $n$ tasks sorted in non-decreasing order of execution time, i.e., 
in a "shortest task first" order, to yield a minimal value of the response time.
If this ordering is given by the permutation $\pi(i), i=1,2,\ldots, n$,
the response time of task $i$ is $ r_i = \sum_{j=1}^{i}{t_{\pi(j),p}}$
and the mean response time is $R=\frac{1}{n}\sum_{i=1}^{n}{r_i}$.
For a given \kinp with $u$ packs 
scheduled in increasing order of the costs of a pack, the
response time of task $i$ in pack $v$, $1 \leq v \leq u$, assigned to $\sigma(i)$ processors, is:
$\hat{r}_i = \sum_{\ell=1}^{v-1}{cost(\ell)} + t_{i,\sigma(i)}$, where $cost(\ell)$ is
the cost of the $\ell$-th pack for $1\leq \ell \leq u$. The mean response time 
of the \kinp $\hat{R}$ is  calculated using these values
and we use $\frac{\hat{R}}{R}$ as the relative response time. 


\begin{figure*}[p]
\begin{center}
$\begin{array}{cc}
\includegraphics[width=.48\linewidth]{./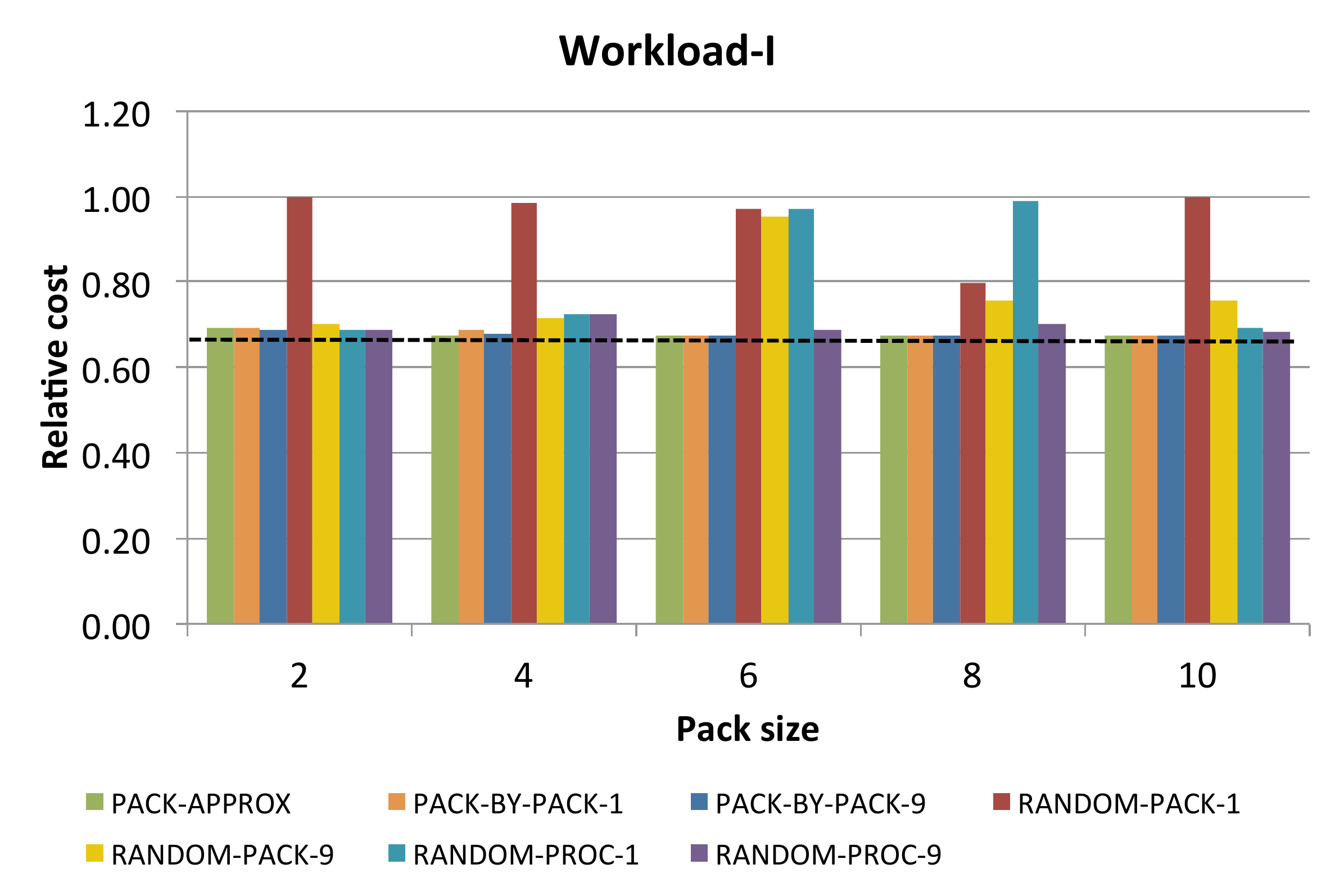} &
\includegraphics[width=.48\linewidth]{./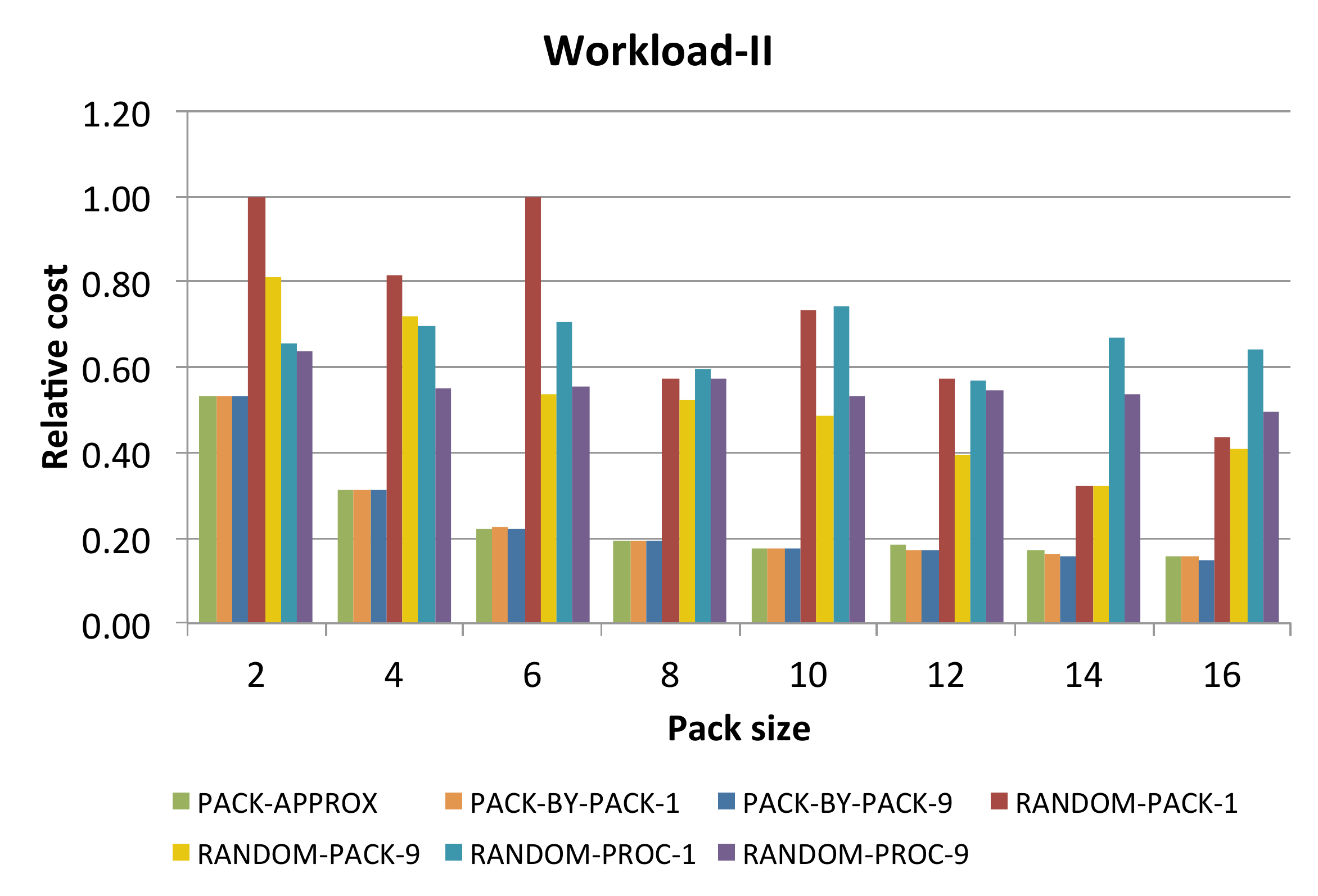} \\
(a) & (b)
\end{array}$
\end{center}
\caption{Quality of co-schedules: Relative costs 
are shown in (a) for Workload-I and in (b) for Workload-II.
The horizontal line in (a) indicates the relative cost of an optimal co-schedule for Workload-I.} 
\label{fig:relativecosts} 
\end{figure*}

\begin{figure*}[p]
\begin{center}
$\begin{array}{cc}
\includegraphics[width=.48\linewidth,clip]{./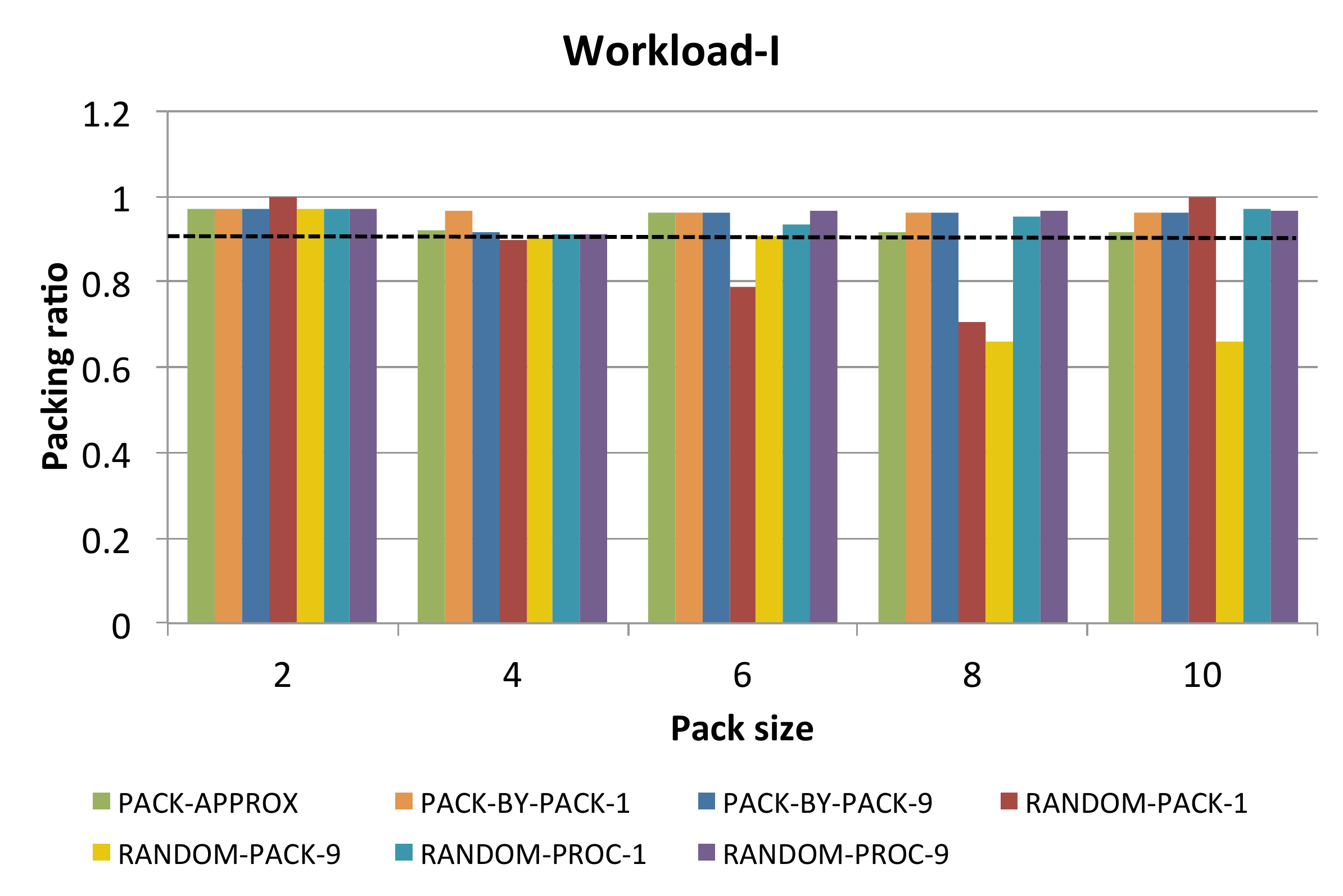} &
\includegraphics[width=.48\linewidth,clip]{./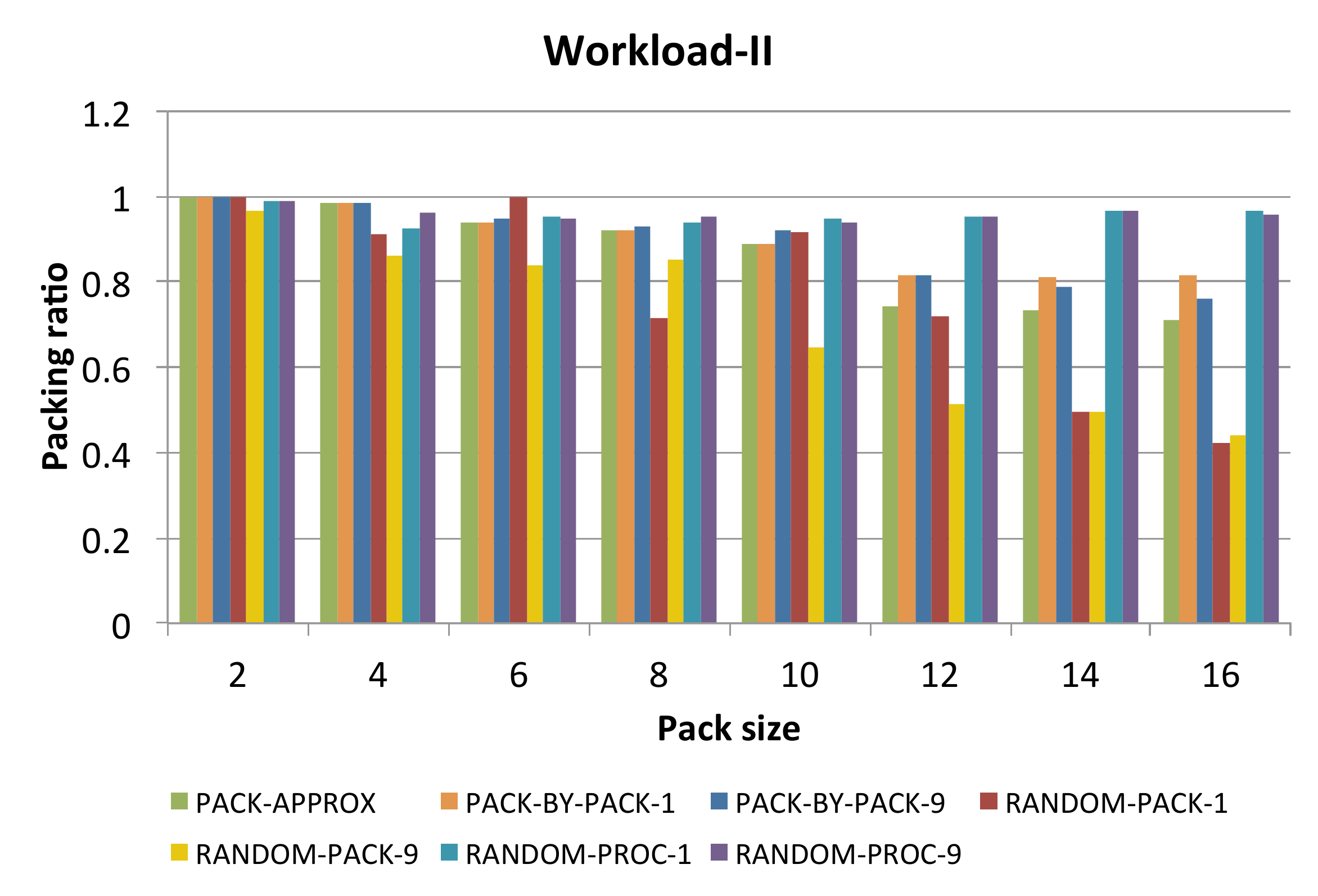} \\
(a) & (b)
\end{array}$
\end{center}
\caption{Quality of packs: Packing ratios  are shown in (a) for
Workload-I and in (b) for Workload-II.
The horizontal line in (a) indicates the packing ratio of an optimal co-schedule for Workload-I.} 
\label{fig:packingratio} 
\end{figure*}

\begin{figure*}[p]
\begin{center}
$\begin{array}{cc}
\includegraphics[width=.48\linewidth,clip]{./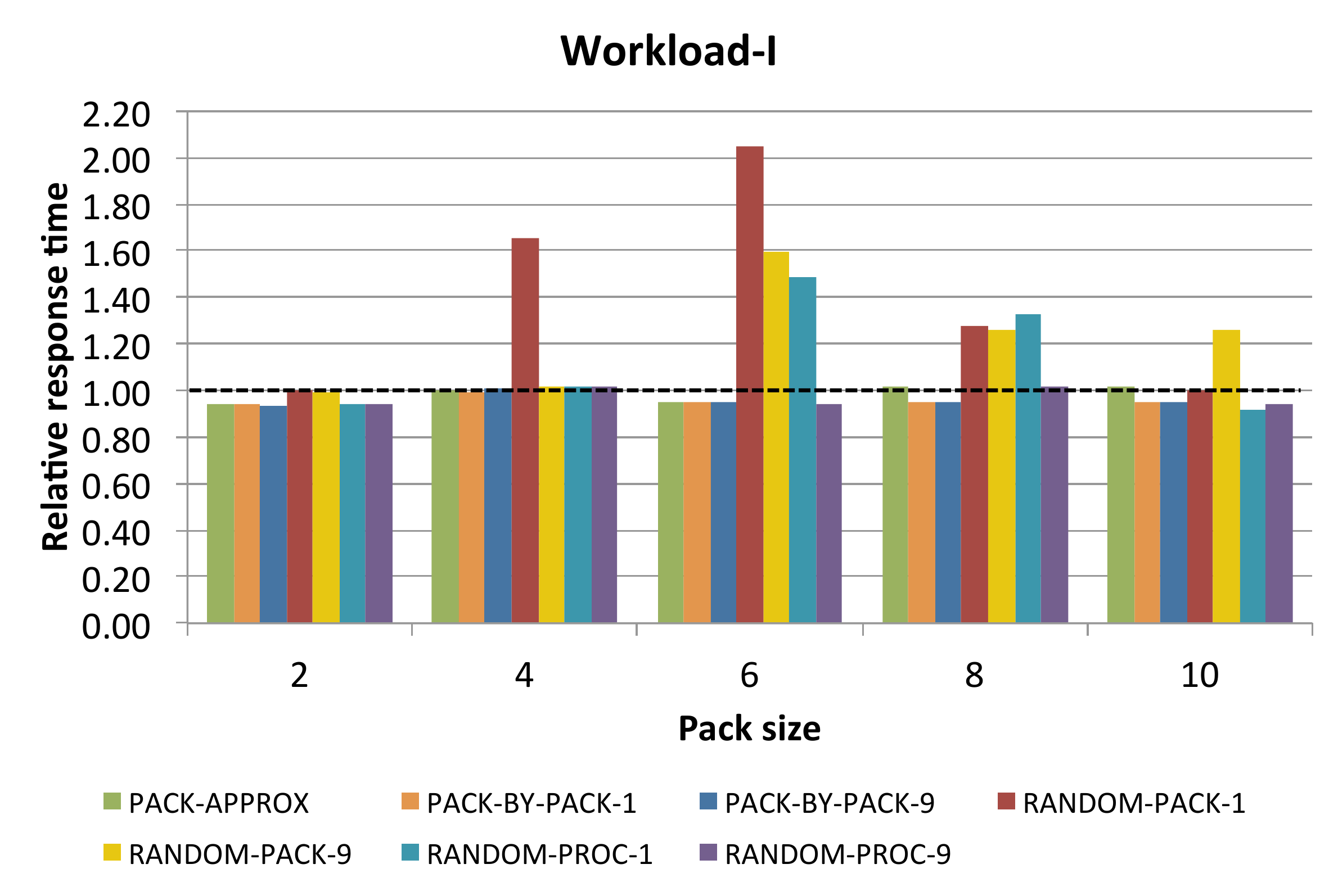} &
\includegraphics[width=.48\linewidth,clip]{./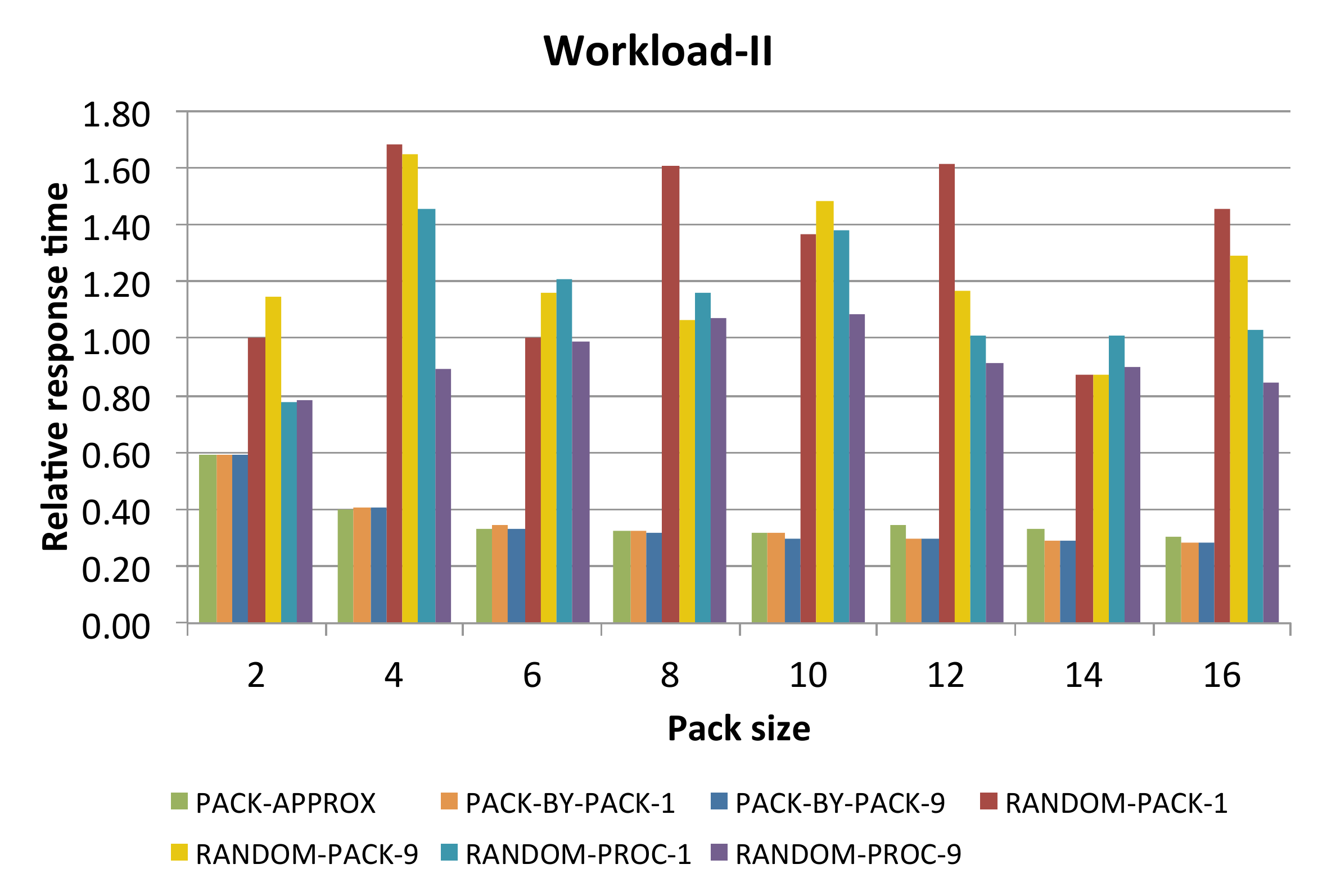}\\
(a) & (b) 
\end{array}$
\end{center}
\caption{Relative response times are shown in (a) for
Workload-I and in (b) for Workload-II; values less than 
1 indicate improvements in response times.
The horizontal line in (a) indicates the relative response time of an optimal co-schedule for Workload-I.} 
\label{fig:relativeresponsetimes} 
\end{figure*}

\begin{figure*}[t]
\begin{center}
$\begin{array}{ccc}
\includegraphics[width=.32\linewidth,clip]{./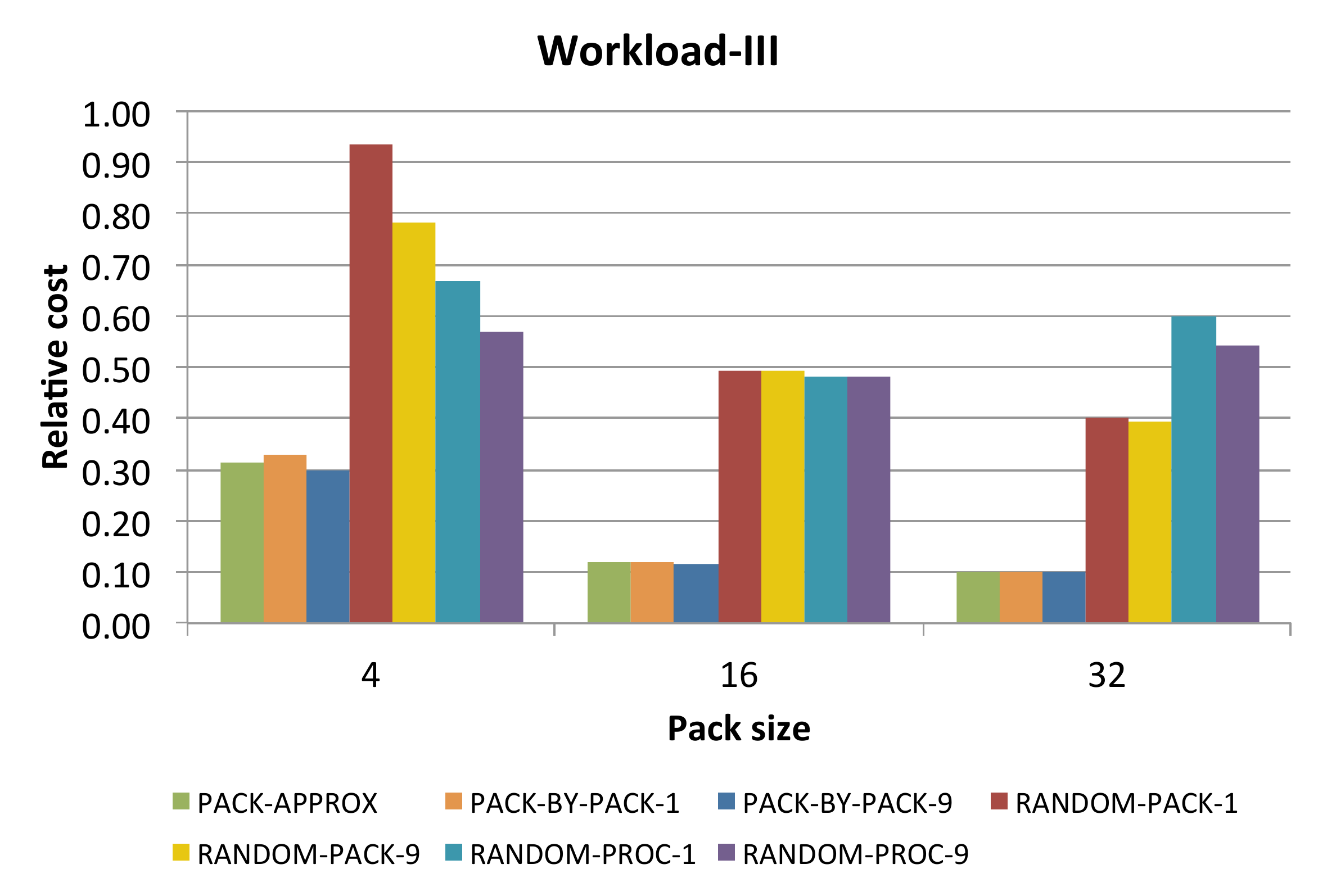} &
\includegraphics[width=.32\linewidth,clip]{./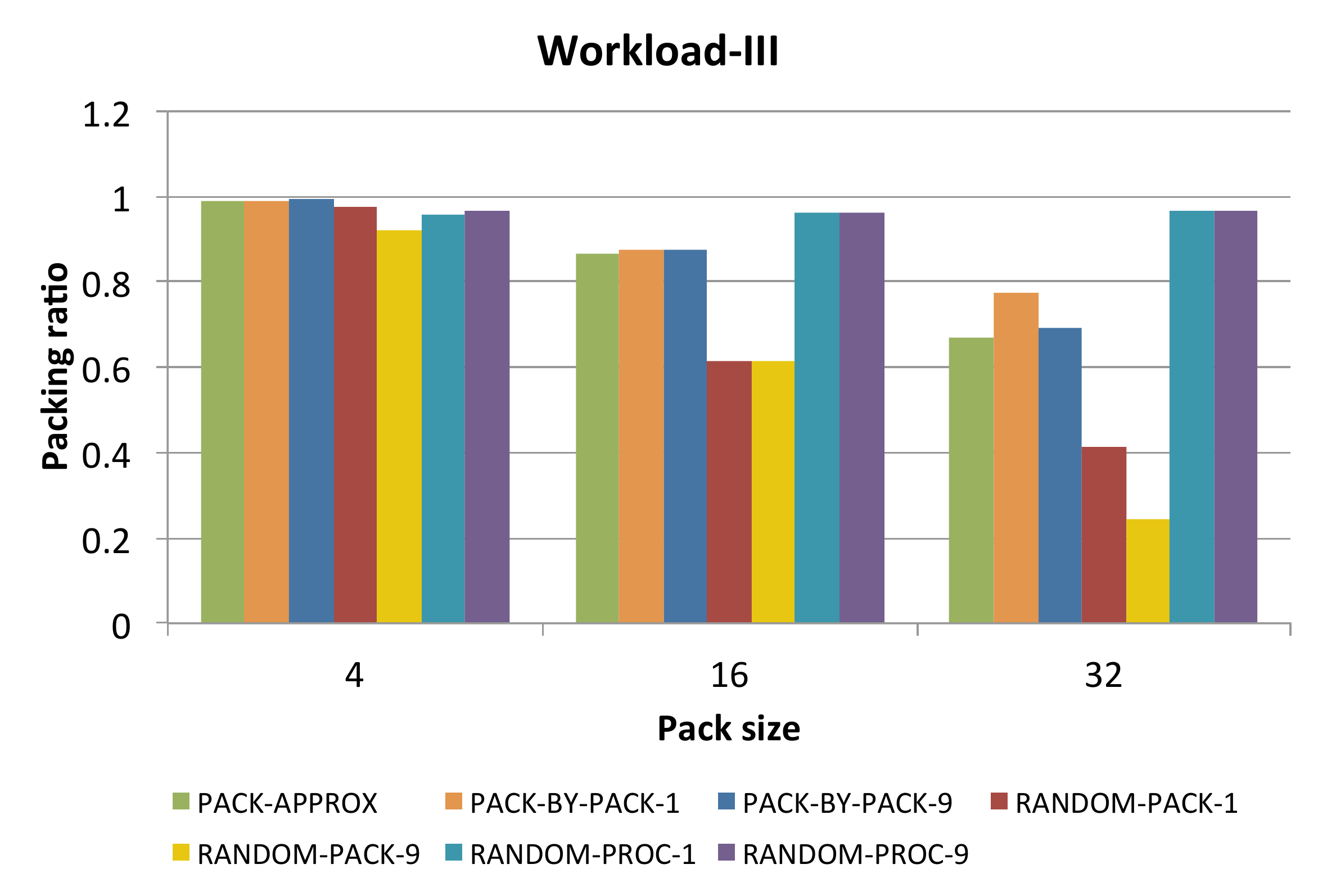} & 
\includegraphics[width=.32\linewidth,clip]{./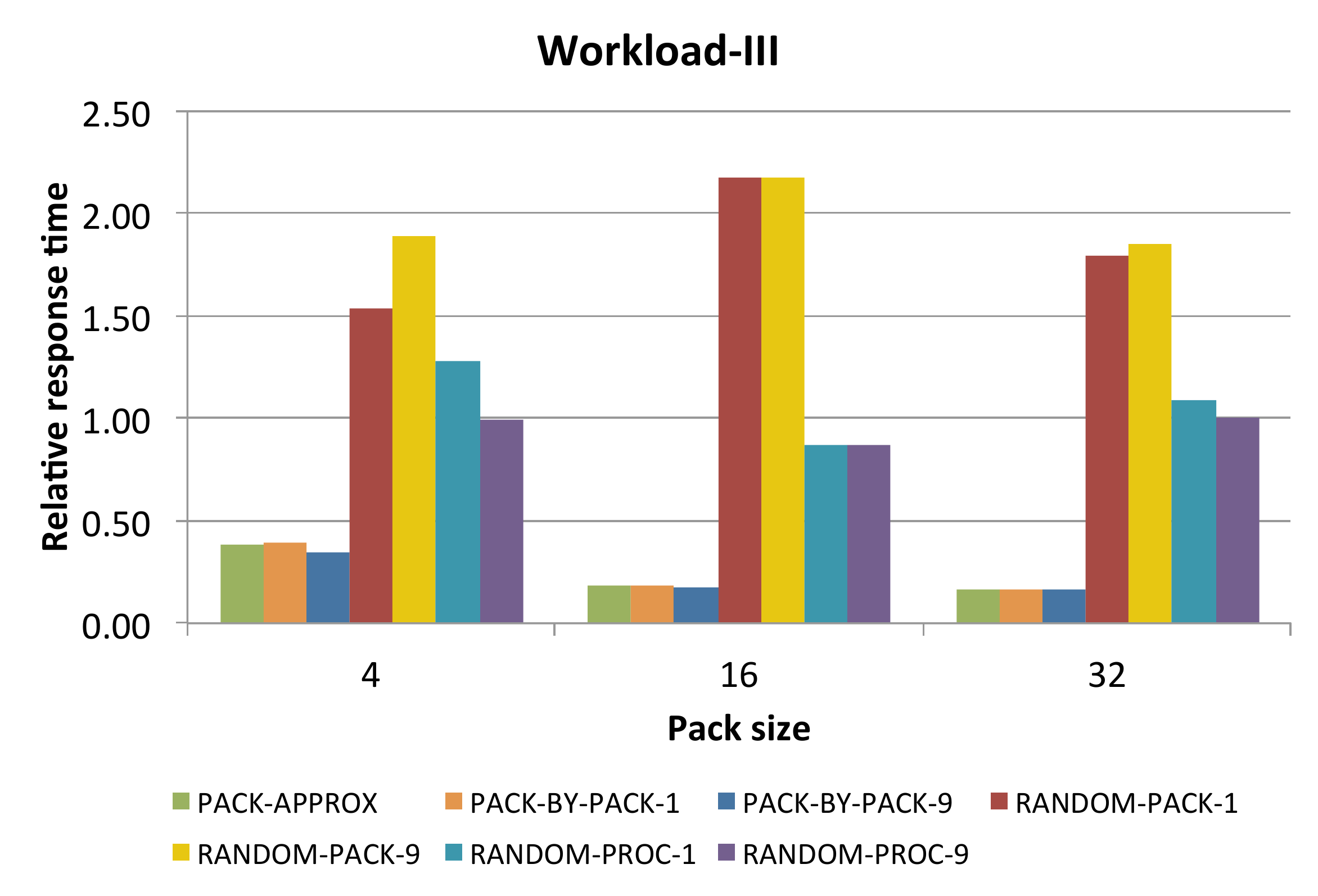} \\
(a) & (b) & (c)\\
\end{array}$
\end{center}
\caption{Relative costs, packing ratios and relative 
response times of co-schedules 
for Workload-III on $256$ cores.} 
\label{fig:workload-III} 
\end{figure*}

\noindent
\textbf{Results for small and medium workloads--}  
For Workload-I, we consider packs of size $k=2,4,6,8,10$ with $16$ processors
(hence a total of $128$ cores).  Note that we do not try $k=p=16$ since there
are only $10$ applications in this workload. For Workload-II, we consider packs of size $k=2,4,6,8,10,12,14,16$. 

Figure~\ref{fig:relativecosts} shows the relative cost of co-schedules
computed by the heuristics. 
For Workload-I (Figure~\ref{fig:relativecosts}(a)), the optimal co-schedule was constructed
using exhaustive search. 
We observe that the optimal co-schedule has
costs that are more than 35\% smaller than the cost of a 1-pack schedule
for Workload-I. Additionally, we observe that \ok and \bbb compute co-schedules
that are very close to the optimal one for all values of the pack size. 
Both \randompack and \randomproc
perform poorly when compared to \bbb and \ok, especially when
a single run is performed. 
As expected,
\randomproc does better than \randompack because it benefits from the
use of Algorithm~\ref{algo.makepack}, and for this small workload,
\randomproc-9 almost always succeed to find a near-optimal co-schedule. 
The results are similar
for the larger Workload-II as shown in Figure~\ref{fig:relativecosts}(b),
with an increased gap between random heuristics and the packing ones. 
Computing the optimal co-schedule was not feasible because of the 
exponential growth in running times for exhaustive search. 
With respect to the cost of a 1-pack schedule, we observe very significant
benefits, with a reduction in costs of most than  80\% for larger values
of the pack size, and in particular in the unconstrained case where $k=p=16$. 
This corresponds to significant savings in energy consumed
by the hardware for servicing a specific workload.

Figure~\ref{fig:packingratio} shows the quality of packing achieved
by the heuristics. The packing ratios are very close to one for
\bbb and \ok,  
indicating that our methods are producing high
quality packings.  In most cases, \randomproc and  \randompack
also lead to high packing ratios. 

Finally, Figure~\ref{fig:relativeresponsetimes} shows that \bbb and \ok 
produce lower cost schedules with commensurate reductions
in response times. For Workload-II and larger values of the pack size, response
time gains are over 80\%, making \kinp attractive from the
user perspective.


\noindent
\textbf{Scalability--} 
Figure~\ref{fig:workload-III} shows scalability trends 
for Workload-III with $260$ tasks on $32$ processors (hence a total of $256$ cores.) 
Although  
the heuristics, including \randompack and \randomproc,
 result in reducing costs relative to
those for a 1-pack schedule, \ok and \bbb are clearly superior, even
when the random schemes are run $9$ times. 
We observe that for pack sizes of $16$ and $32$,
\ok and \bbb produce
high quality co-schedules with costs and response times that
are respectively 90\% and 80\% lower than those for a 1-pack 
schedule. 
\bbb-1 obtains results that are very close to those of \bbb-9, 
hence even a single run returns a high quality co-schedule. 


\noindent
\textbf{Running times--} We report in Table~\ref{tab:time} the running times
of the seven heuristics. All heuristics run within a few milliseconds, even
for the largest workload. Note that \ok was faster on Workload-II than Workload-I
because its execution performed fewer iterations in this case. 
Random heuristics are slower than the other heuristics, because of the cost
of random number generation.
\bbb has comparable running times with \ok, even when $9$ values of $\varepsilon$ are used.

\begin{table}[h!]
\begin{center}
\begin{tabular}{|c|c|c|c|}
\hline
 &  Workload-I & Workload-II & Workload-III \\ \hline
 \ok & 0.50 & 0.30 & 5.12 \\
  \bbb-1 & 0.03 & 0.12 & 0.53 \\
  \bbb-9 & 0.30 & 1.17 & 5.07 \\
  \randompack-1 & 0.07 & 0.34 & 9.30 \\
  \randompack-9 & 0.67 & 2.71 & 87.25 \\
  \randomproc-1 & 0.05 & 0.26 & 4.49 \\
  \randomproc-9 & 0.47 & 2.26 & 39.54   \\\hline
\end{tabular}
\end{center}
\caption{Average running times in milliseconds.}
\label{tab:time}
\end{table}	

%

\noindent
\textbf{Summary of experimental results--}
Results  
indicate that heuristics  \ok and  \bbb   both
produce co-schedules of comparable quality.  
\bbb-9 is slightly better than \bbb-1, at a price of an increase in the running time from using more values of
$\varepsilon$. 
However, the running time remains very small, and similar to that of \ok. 
Using more values of $\varepsilon$ to improve \bbb  leads to small gains
in performance (e.g, $1\%$ gain for \bbb-9 compared to \bbb-1 for $k=16$ in Workload-II). However,
these small gains in performance correspond to significant gains in system throughput and energy, and
far outweigh the costs of computing multiple co-schedules. This 
makes  \bbb-9  the 
heuristic of choice.
Our experiments with $99$ values of $\varepsilon$ did not improve performance,  
indicating that large increases in  the number of  $\varepsilon$ values  may not be necessary.



\section{Conclusion}
\label{sec.conc}

We have developed and analyzed  
co-scheduling algorithms for
processing a workload of parallel tasks. Tasks are assigned to processors
and are partitioned into \packs of size $k$ with the constraint that
the total number of processors assigned over all tasks in a \pack does
not exceed $p$, the maximum number of available processors.  Tasks in
each \pack execute concurrently on a number of processors, and workload
completes in time equal to sum of the execution times of the \packs.
We have provided complexity results for 
minimizing the sum of the execution times of the \packs. The bad news is that 
this optimization problem
is NP-complete. This does not come as a surprise because we have to choose for each task both 
a number of processors and a \pack, and this double freedom induces a huge combinatorial solution space.
The good news is that we have provided an optimal resource allocation strategy once the packs are formed,
together with an efficient load-balancing algorithm to partition tasks with pre-assigned resources into packs.
This load-balancing algorithm is proven to be a $3$-approximation algorithm for the most general 
instance of the problem. Building upon these positive results, 
we have developed several heuristics that exhibit very good performance
in our test sets. These heuristics can significantly reduce the time for
completion of a workload for corresponding savings in system energy costs.
Additionally, these savings come along with measurable benefits in the
average response time for task completion, thus making it attractive
from the user's viewpoint.\\
These co-schedules can be computed very rapidly when speed-up profile
data are available.  Additionally, they operate at the scale of workloads
with a few to several hundred applications to deliver significant gains in
energy and time per workload.  These properties  present opportunities
for developing hybrid approaches that can additionally leverage
dynamic voltage and frequency scaling (DVFS) within an application.
For example, Rountree et al.~\cite{rountree} have shown that depending
on the properties of the application, DVFS can be applied at runtime
through their Adagio system, to yield system energy savings of  5\%
to 20\%.  A potential hybrid scheme could start with the computation 
of  a \kinp for a workload, following which DVFS could
be applied at runtime per application.\\ 
%
Our work indicates the potential benefits of co-schedules for high
performance computing installations where even medium-scale facilities
consume Megawatts of power. We plan to further test and extend this approach
towards deployment in university scale computing facilities where workload
attributes often do not vary much over weeks to months and energy costs
can be a  limiting factor.\\[.2cm]
{\em Acknowledgments.} This work was supported in part by
the ANR {\em RESCUE} project. The research of Padma Raghavan was supported in part by the 
The Pennsylvania State University and grants from the U.S. National Science Foundation.
The research of Manu Shantharam was supported in part by the U.S.  National Science Foundation 
award CCF-1018881.



\bibliographystyle{acm}
\bibliography{biblio,expbib,newrefs}

\end{document}